\numberwithin{equation}{section}
\def\sf#1#2{{\textstyle\frac{#1}{#2}}} 
\newcommand{\myatop}[2]{\genfrac{}{}{0pt}{}{#1}{#2}}  
\def\Re{\mathop{\rm Re}\nolimits}
\def\Im{\mathop{\rm Im}\nolimits}
\theoremstyle{plain}  
\newtheorem{theorem}{Theorem}[]
\newtheorem{lemma}[theorem]{Lemma}
\begin{document}

\title{Dynamics of KPI lumps}
\author{Sarbarish Chakravarty and Michael Zowada \\[1ex]
\small\it\
Department of Mathematics, University of Colorado, Colorado Springs, CO 80918 \\}
\date{}
\maketitle \kern-2\bigskipamount

\begin{abstract}
A family of nonsingular rational solutions of the 
Kadomtsev-Petviashvili (KP) I equation are investigated.
These solutions have multiple peaks whose heights are 
time-dependent and the peak trajectories in the
$xy$-plane are altered after collision. Thus
they differ from the standard multi-peaked KPI simple $n$-lump solutions 
whose peak heights as well as peak trajectories remain unchanged after 
interaction.The anomalous scattering occurs due to a non-trivial
internal dynamics among the peaks in a slow time scale. This phenomena
is explained by relating the peak locations to the roots of
complex heat polynomials. It follows from
the long time asymptotics of the solutions that the peak trajectories
separate as $O(\sqrt{|t|})$ as $|t| \to \infty$, and all the peak heights
approach the same constant value corresponding to that of the simple 
1-lump solution. Consequently, a multi-peaked $n$-lump solution 
evolves to a superposition of $n$ 1-lump solutions asymptotically 
as $|t| \to \infty$.
 
\end{abstract}

\thispagestyle{empty}

\section{Introduction}

Many nonlinear wave equations admit
special classes of exact solutions in the form of solitary waves 
that are of physical interest. Such equations arise in such diverse 
fields as fluid dynamics, nonlinear optics, magnetic systems, and plasma physics. 
An important example of such special nonlinear wave equations  
is the Kadomtsev-Petviashvili (KP) equation, which is a $(2+1)$-dimensional 
dispersive equation describing the propagation of small amplitude, 
long wavelength, uni-directional waves with small transverse variation 
(i.e., quasi-two-dimensional waves). It was originally proposed by Kadomtsev 
and Petviashvili~\cite{KP70} and has found applications in the study
of ion-acoustic (see e.g.~\cite{IR00,L98} and references therein),
and shallow water waves (see e.g. the monographs~\cite{AS81,A11,K18}).
The KP equation is also an integrable nonlinear equation with remarkably 
rich mathematical structure which are documented
in many research monographs (see e.g.~\cite{AS81,NMPZ1984,AC91,IR00,H04,K18}).

There are two mathematically distinct versions of the KP equation, referred
to as KPI and KPII. This article is concerned with the KPI equation which
can be expressed as
\begin{equation}\label{kp}
(4u_t+6uu_x+u_{xxx})_x=3u_{yy}.
\end{equation}
Here $u=u(x,y,t)$ represents the normalized wave amplitude at the point $(x,y)$ in
the $xy$-plane for fixed time $t$, and the subscripts denote partial derivatives.
Switching $3u_{yy}$ to $-3u_{yy}$ in \eqref{kp} yields the KPII equation. 
From the water wave theory perspective, KPI corresponds to large surface tension
while KPII arises in the small surface tension limit of the 
multiple-scale asymptotics~\cite{AS81,A11}.

The KPI equation admits large classes of exact rational solutions
known as lumps which are localized in the $xy$-plane and are non-singular
for all $t$. The simplest type of rational solutions was first discovered 
analytically by employing the dressing method~\cite{MZBIM77} and subsequently 
via the Hirota method~\cite{SA79}. 
These simple $n$-lump solutions consist of $n$ peaks 
which interact without any change of form or phase. Each peak
travels with distinct asymptotic velocity and
its trajectory remains unchanged before and after interaction 
as $|t| \to \infty$. These solutions were also obtained via the inverse
scattering transform (IST) in~\cite{FA83} where it was shown that the 
simple lumps (for a fixed $t$) correspond to bound state potentials
associated with complex conjugate pairs of simple eigenvalues of the 
non-stationary Schr\"odinger equation. Yet another class of KPI rational
solutions arise as bound states corresponding to eigenvalues with multiplicities
associated with the time-dependent Schr\"odinger equation, and are also
amenable to IST methodology~\cite{AV97,VA99}. These are called the multi-lump 
solutions which were originally found in~\cite{JT78} by algebraic
techniques and further investigated by several 
authors~\cite{PS93,GPS93,ACTV00}. The multi-lump solution is
an ensemble of a finite number of localized structures (or peaks) 
interacting in a non-trivial manner
unlike the $n$-simple lump solution. The peaks move with the same center-of-mass
velocity but undergo anomalous scattering with a non-zero deflection angle
after collision. Furthermore, the peak amplitudes evolve in time and reaches
a constant asymptotic value which equals that of the simple $1$-lump peak.
It has been also known~\cite{K78,GPS93,S94} that the dynamics of the KPI lumps are 
related to the multi-particle Calogero-Moser system; this connection was further 
explored in~\cite{P94,P98}.  

The purpose of this article is to revisit a special family of
multi-lump solutions referred throughout the article as the
$n$-lump solutions which are relatively simple to construct.
We carry out a detailed study of their dynamics, which, to the
best of our knowledge, was not done earlier.
The family of solutions considered here exhibit a simple yet 
interesting interaction pattern, namely, that in a comoving frame 
the multiple lumps interact 
along a line and move away from each other along an orthogonal
line in the plane after the interaction process is complete. 
Such anomalous scattering process is different from the usual
scattering of KP solitons or even the simple $n$-lump
solutions. The comoving frame travels with a uniform velocity
in the $xy$-plane, while the internal dynamics of the component
lumps takes place at a slower time that scales as $O(|t|^{1/2})$.
The anomalous scattering is explained by analyzing the
explicit formulas for the simpler solutions and then via
asymptotic analysis for the general case. The 
long-time asymptotics of the solutions are also worked out
in order to demonstrate that indeed the
solution splits up into $n$ distinct peaks like it is usually 
assumed in the literature. The peak locations evolve as a dynamical
system which is a reduction of the Calogero-Moser system, and can
be related to the zeros of polynomial solutions of the complex
two-dimensional heat equation. An elementary analysis of the
dynamical system provides a simple analytical explanation of
the anomalous (90-degree) scattering of the $n$-lump solutions.
Finally, the eigenfunction of the KP Lax pair associated with the
$n$-lump solutions are derived via binary Darboux transformations
to make connection with the results obtained by the IST method.
The binary Darboux transformation yields simple explicit formula 
for the $n$-lump eigenfunction that is rather difficult to obtain
from the IST method. Although some topics of this article had
been investigated in the past by several authors (including one
of the present authors), we believe that the explicit formulas 
for the polynomial $\tau$-function and the $n$-lump eigenfunction,
along with the long time asymptotics developed to study the peak dynamics
in terms of heat polynomials, are some of the new results obtained
in this paper.

The paper is structured as follows: in Section 2 the
generalized Schur polynomials are introduced; these 
polynomials play a fundamental role in the
theory of rational, multi-lump solutions of KPI. Then a 
polynomial form of the $n$-lump $\tau$-function is constructed
in terms of these polynomials. Some examples
of solutions for $n = 1,2,3$ are discussed in Section 3. Section 4 
is devoted to the asymptotic analysis of the $n$-lump solution for 
large $|t|$ and the connection to the complex heat polynomials
in two-dimension. The binary Darboux transformation and its 
application to construct the eigenfunction for the Lax pair
associated with the $n$-lump solution and connection with some
results obtained via IST are presented in Section 5. Section 6
contains some concluding remarks including future directions
of this continued investigation of KPI lumps. 
Finally, an appendix validating the approximation used throughout
Sections 3 and 4 to estimate the exact locations of the $n$-lump peaks, 
is included.
\section{Construction of special multi-lumps}
In this section we describe how to construct a special class of the
multi-lump solutions in a straight-forward, algebraic fashion. Although 
this construction
is related to the binary Darboux transformation or the so called Grammian
technique, no prior knowledge of such methods is required for this particular case.
The building blocks for these solutions are given by a special type of
complex polynomials called the generalized Schur polynomials which are introduced
below.

\subsection{Generalized Schur polynomials}
Let $k$ be a complex parameter and $\theta := kx+k^2y+k^3t+\gamma(k)$ where 
$(x,y,t) \in \mathbb{R}^3$ and $\gamma(k)$ is an arbitrary function
which is differentiable to all orders. Then the generalized Schur polynomial $p_n$
is defined via the relation
\begin{equation}
\phi_n:=\frac{1}{n!}\partial_k^n\exp(i\theta) = p_n\exp(i\theta)\,.
\label{sp}
\end{equation}
It is easy to see that $p_n = p_n(\theta_1, \theta_2, \cdots, \theta_n)$
where $\theta_j := i\partial_k^j\theta/j!$. However, only the first three
\begin{equation}
\theta_1 = i(x+2ky+3k^2t+\gamma_1), \quad \theta_2 =i(y+3kt+\gamma_2),
\quad \theta_3 = i(t+\gamma_3)
\label{theta}
\end{equation}
depend on $(x,y,t)$ while $\theta_j = i\gamma_j$
for $j>3$. Here, $\gamma_j(k) = \partial_k^j\gamma(k)/j!, \, j=1,\cdots,n$ are
to be viewed as independent complex parameters depending on the $k$-derivatives
of $\gamma(k)$. A generating function for the $p_n$'s is given by the Taylor series
\begin{equation}
\exp(i\theta(k+h)) = \exp(i\theta) \sum_{n=0}^\infty p_n(k)h^n \,,  
\label{gen}
\end{equation}
which yields, after expanding $i\theta(k+h)=i\theta(k)+h\theta_1+h^2\theta_2+\cdots$,
and comparing with the right hand side, an explicit expression for $p_n$, namely
\begin{equation}
p_n = \hspace{-.2in} \sum_{\myatop{m_1,m_2,\cdots,m_n \geq 0}{m_1+2m_2+\cdots nm_n=n}}  
\prod_{j=1}^n\frac{\theta_j^{m_j}}{m_j!} \,.
\label{pn}
\end{equation}   
The first few generalized Schur polynomials are given by
\[p_0=1, \quad p_1=\theta_1, \quad p_2 = \sf12\theta_1^2+\theta_2 \quad
p_3 = \sf{1}{3!}\theta_1^3+\theta_1\theta_2+\theta_3 \ldots \,.\]
It follows from \eqref{pn} that the $p_n$ is a {\it weighted}\, homogeneous
polynomial of degree $n$ in $\theta_j, \, j=1,\ldots,n$, i.e.,
$p_n(a\theta_1,a^2\theta_2,\cdots,a^n\theta_n)=
a^np_n(\theta_1,\theta_2,\cdots,\theta_n)$, where weight$(\theta_j)=j$.
Some useful properties including recurrence relations for the $p_n$'s can
be derived using \eqref{sp} and \eqref{gen}. These are listed below and will be
used throughout this article. 
\begin{subequations}
\begin{equation}
\partial^j_{\theta_1}p_n = \partial_{\theta_j} p_n = 
\left\{ \begin{matrix} p_{n-j}\,, & \quad j\leq n \cr
0\,, & \quad j>n \end{matrix} \right.  \label{propa} 
\end{equation}
\begin{equation}
p_{n+1}(k) = \sf{1}{n+1}\sum_{j=0}^n (j+1)\,\theta_{j+1}\,p_{n-j}\,, 
\quad n \geq 0, \qquad p_0=1 
\label{propb}
\end{equation} 
\begin{equation}
p_n(\theta_1+h_1, \theta_2+h_2, \cdots, \theta_n+h_n) =
\sum_{j=0}^n p_j(h_1,h_2,\cdots,h_j) \, p_{n-j}(\theta_1,\theta_2,\cdots,\theta_{n-j})
\label{propc}
\end{equation}
\end{subequations}
\paragraph{Remarks}
\begin{itemize}
\item[(a)] The generalized Schur polynomials were used earlier in
the study of rational solutions for the Zakharov-Shabat and KP 
hierarchies~\cite{M79,P94,P98}. These involve
multi-time variables with the phase defined as quasi-polynomial
$\theta = kt_1+k^2t_2+k^3t_3+\cdots$. In this paper, we restrict the $p_n$'s
to depend only on the first three variables $(t_1,t_2,t_3)=(x,y,t)$ while the
dependence on the remaining variables are parametric, through the complex
parameters $\theta_j= i\gamma_j$ for $j>3$.
\item[(b)] By choosing $\theta$ instead of $i\theta$ where
$\theta = kt_1+k^2t_2+k^3t_3+\cdots$ depends on infinitely many ``time'' variables,
and by evaluating the $k$-derivatives at $k=0$ in \eqref{sp}, one
recovers the standard Schur polynomials which play an important role in the
Sato theory of the KP hierarchy~\cite{S81} (see also~\cite{OSTT88,K18}). 
In fact, then the Schur polynomials are generated  
by \eqref{gen} with $k=0$ (and $i\theta \to \theta$), and are given explicitly
by the formula \eqref{pn} with $\theta_j \to t_j$.
\end{itemize} 
\subsection{The multi-lump $\tau$-function}
The solution of the KPI equation \eqref{kp} can be expressed as
\begin{equation}
u(x,y,t) = 2 (\ln \tau)_{xx} \,,
\label{tau}
\end{equation}
where the function $\tau(x,y,t)$ is known as the the 
$\tau$-function~\cite{S81,H04}. Here we describe how to construct
a polynomial $\tau$-function which yields a special family of
rational multi-lump solutions of KPI.

First, we note that the exponential $\exp(i\theta)$ introduced in Section 2.1
as well as its $k$-derivative of any order, solve the linear system
\begin{equation}
i\phi_y = \phi_{xx}, \qquad \quad \phi_t+\phi_{xxx} =0 \,,
\label{phi}
\end{equation}
where $\phi(x,y,t,k)$ is parametrized by $k \in \mathbb{C}$.
Next we take $\phi_n(x,y,t,k)=p_n\exp(i\theta)$ as given in \eqref{sp}  
and its complex conjugate $\bar{\phi}_n$ to define a $n$-lump $\tau$-function
as follows:
\begin{equation}
\tau_n(x,y,t) = \int_x^\infty \phi_n\bar{\phi}_n\,dx' =
\int_x^\infty|p_n|^2\exp i(\theta-\bar{\theta})\,dx' \,,
\label{taun}
\end{equation}
where the parameter $k$ in $\theta=kx+k^2y+k^3t+\theta_0(k)$ is chosen
such that $b:=\Im(k)>0$ in order for the integral above to converge. The
integral in \eqref{taun} may be evaluated by integration by parts, then 
from \eqref{tau}, the $n$-lump solution is given by 
\begin{equation}
u_n(x,y,t) = 2 (\ln \tau_n)_{xx} = 2 (\ln F_n)_{xx}, \quad \qquad
F_n = \sum_{j=0}^{2n}\frac{\partial^j_x|p_n|^2}{(2b)^j} \,,
\label{Fn}
\end{equation}
where the last equality follows since the factor 
$\sf{1}{2b}\exp i(\theta-\bar{\theta})$
in $\tau_n$ is anihilated by $\ln(\cdot)_{xx}$. Recall from Section 2.1 that 
$p_n$ is a polynomial of degree $n$ in $x,y,t$, hence $F_n$ is a polynomial
of degree $2n$ in $x,y,t$. Consequently, $u_n(x,y,t)$ is a rational function
of its arguments and decays as $\sf{1}{R^2}$ as $R=\sqrt{x^2+y^2} \to \infty$
for fixed $t$.

It can be directly shown that the $\tau_n$ in \eqref{taun} is indeed
a $\tau$-function for the KPI equation. Inserting \eqref{tau} into \eqref{kp},
integrating the resulting equation twice with respect to $x$, 
and setting the integration
constants to zero (since for lumps, the derivatives of $\ln(\tau)$ 
vanish as $R \to \infty$), one obtains 
\[ 4(\tau\tau_{xt}-\tau_x\tau_t)-3\tau_{xx}^2-\tau\tau_{xxxx}+
4\tau_x\tau_{xxx}-3\tau_y^2+3\tau\tau_{yy}=0 \,.\]
Next, by setting $\tau=\tau_n$ from \eqref{taun} in the above equation,
and using \eqref{phi} and its complex conjugate with $\phi=\phi_n$, 
a straight-forward calculation yields the desired result.

It is evident from \eqref{Fn} that the multi-lump solution is in effect
determined by the polynomial part of of $\tau_n$, namely, $F_n(x,y,t)$.
The expression for $F_n$ in \eqref{Fn} can be cast as a Hermitian form
as follows:
\begin{align*}
F_n(x,y,t) &= \sum_{j=0}^{2n}\frac{1}{(2b)^j} 
\sum_{l=0}^j\binom{j}{l}\partial_x^lp_n\partial_x^{j-l}\bar{p}_n 
= \sum_{l=0}^n\sum_{m=0}^n
\frac{1}{(2b)^{l+m}}\binom{l+m}{m}\partial_x^lp_n\partial_x^m\bar{p}_n \\
&=p^{\dagger}Cp\,, \qquad \text{with} \quad 
p=(p_n, \partial_xp_n, \cdots, \partial_x^np_n), 
\qquad C_{lm} = \frac{1}{(2b)^{l+m}}\binom{l+m}{m} \,,
\end{align*}
and where $C$ is a real, symmetric $(n+1) \times (n+1)$
matrix. The matrix $C$ admits a unique
decomposition $C=U^{\dagger}DU$ where $U$ is a real, upper-triangular matrix
with 1's along its main diagonal and $D$ is a diagonal matrix with 
$D_{jj}=(2b)^{-2j}, \, j=0,1,\ldots,n$.
Hence, $F_n$ can be expressed as a sum of squares $F_n=q^{\dagger}Dq, \, q=Up$,
which can be explicitly written as
\begin{equation}
F_n(x,y,t) = \sum_{j=0}^n \left\vert\sum_{l=j}^n
\binom{l}{j}\frac{\partial_x^lp_n}{(2b)^l}\right\vert^2 =
\sum_{j=0}^n \left\vert\sum_{l=j}^n
\binom{l}{j}\left(\frac{i}{2b}\right)^lp_{n-l}\right\vert^2\,,     
\label{square}
\end{equation}
the last equality uses the fact 
$\partial_x p_n = i\partial_{\theta_1}p_n=ip_{n-1}$, which follows from 
\eqref{theta} and \eqref{propa}. It should be clear from the expression
of the generalized Schur polynomials in \eqref{pn} and the definitions
of the $\theta_1, \theta_2, \theta_3$ in \eqref{theta} that $F_n$ in \eqref{square}
is a positive definite polynomial in $x,y,t$ of degree $2n$, and it depends
on $2n+2$ real parameters, namely, $k := a+ib$ and 
$\gamma_j \in \mathbb{C}, \, j=1,2,\ldots,n$. Moreover, $F_n$ is also a
weighted homogeneous polynomial of degree $2n$ in 
$\theta_j, \bar{\theta}_j \, j=1,2,\ldots,n$ 
and $\Im(k)=b$ with weight$(\theta_j) =$ weight$(\bar{\theta}_j) =j$, and 
weight$(b)= -1$.
Consequently, the KPI multi-lump solution $u_n(x,y,t)$ given by \eqref{Fn}
is a non-singular, rational
function in the $xy$-plane for all values of $t$.
\paragraph{Remarks} 
\begin{itemize}
\item[(a)] One could exploit the linearity of the system
\eqref{phi} to choose 
$$\phi_n = \sum_{j=0}^n a_j(k)p_j(\theta_1,\cdots,\theta_j)\exp (i\theta(k))$$  
instead, to construct the $\tau$-function in \eqref{taun}. 
However, the above
sum can be reduced to a single generalized Schur polynomial
$p_n(\theta_1+h_1,\cdots,\theta_n+h_n)$ using \eqref{propc} for
suitable choices for the $h_j$'s which can then be absorbed in the arbitrary
constants $\gamma_j$ appearing in the $\theta_j$ variables for $j=1,\ldots,n$.
\item[(b)] The function $\phi_n = p_n\exp(i\theta)$ is also
a $\tau$-function for the KPI equation. But the corresponding rational
solution $u(x,y,t)$ given by \eqref{tau} is singular in the $xy$-plane for 
any given $t$. The singularities occur at the zeros of the generalized
Schur polynomial $p_n$.
\item[(c)] Since the KPI equation admits a constant solution 
$u(x,y,t)=c, \, c \in \mathbb{R}$, one may be interested in studying
the multi-lump solutions in a constant background. This is easily accomplished
by a few small modifications as follows: Equation \eqref{Fn}
should read as $u_n = c + 2(\ln \tau_n)_{xx}$ where $\tau_n$ is given 
by \eqref{taun} except that the $\phi_n$ satisfies the linear system:\,
$i\phi_y = \phi_{xx} + c\phi, \,\,  \phi_t + \phi_{xxx} + \sf34 c\phi_x=0$.
The solution to this new linear system is given by 
$\phi_n = \sf{1}{n!}\,\partial_k^n\exp(i\theta') = p'_n\exp(i\theta')$ where
$\theta' = kx+(k^2-c)y+(k^3-\sf34 kc)t+\gamma(k)$.
The new generalized Schur polynomials $p'_n$ are defined in the same way
as before, the only change being $x \to x-\sf34 ct$ in the definition
of $\theta_1$ in \eqref{theta}.
\item[(d)] Several authors~\cite{DM11,C18} have considered 
a more general choice for $\phi_n$
such as $\phi_n = D_k^n\exp(i\theta)$ with $\theta(k)$ same as
before but $D_k := f(k)\partial_k$ where $f(k)$ is analytic. In such cases,
it is possible to consider $D_k = \partial_z$ in terms of
a (local) uniformizing variable $z(k)$ with $z'(k)=1/f(k)$
and an appropriate branch of its inverse $k(z)$ to express $\theta(k(z))$.
Thus this approach is consistent with this section's formalism although
the corresponding rational KPI solutions are different due to differing
choices for the generalized Schur polynomials $p_n(z)$.
\end{itemize}
\section{Examples of multi-lump solutions}
We now give some examples of the KPI multi-lump solutions and illustrate
the interaction properties of such solutions. It will be convenient
to first introduce a set of coordinates $r,s$ defined in terms of
co-moving coordinates $x',y'$ as follows 
\begin{equation}
r=x'+2ay', \quad s = 2by', \quad \text{where} \quad
x'=x-3(a^2+b^2)t, \quad y'=y+3at \,, 
\label{rs}
\end{equation}
and recall that $a=\Re(k), b=\Im(k)$. As will be shown below that $r,s$
are the natural coordinates to use instead of the $x,y$. The $\theta_j$
variables defined in \eqref{theta} are then given by
\begin{equation}
\theta_1 = i(r+is+\gamma_1),  \quad \theta_2 = \frac{is}{2b} -3bt +i\gamma_2,
\quad \theta_3=i(t+\gamma_3)
\label{thetars}
\end{equation}
\begin{figure}[h!]
\begin{center}
\raisebox{0.1in}{\includegraphics[scale=0.65]{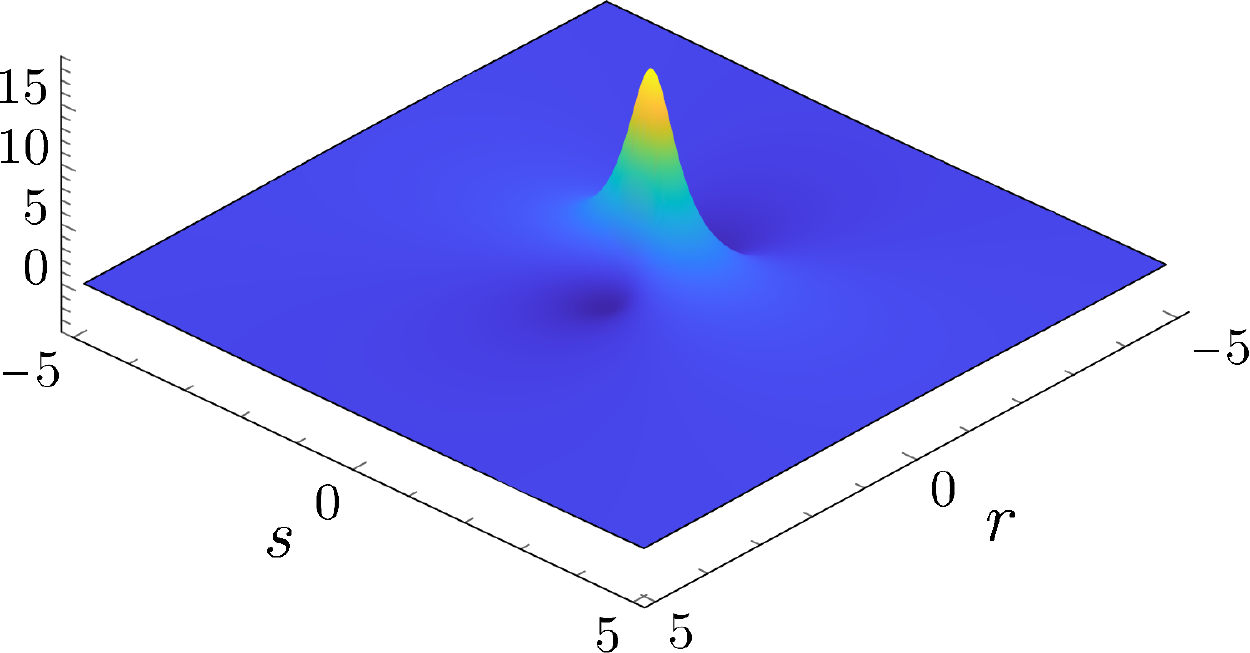}} \qquad
\includegraphics[scale=0.6]{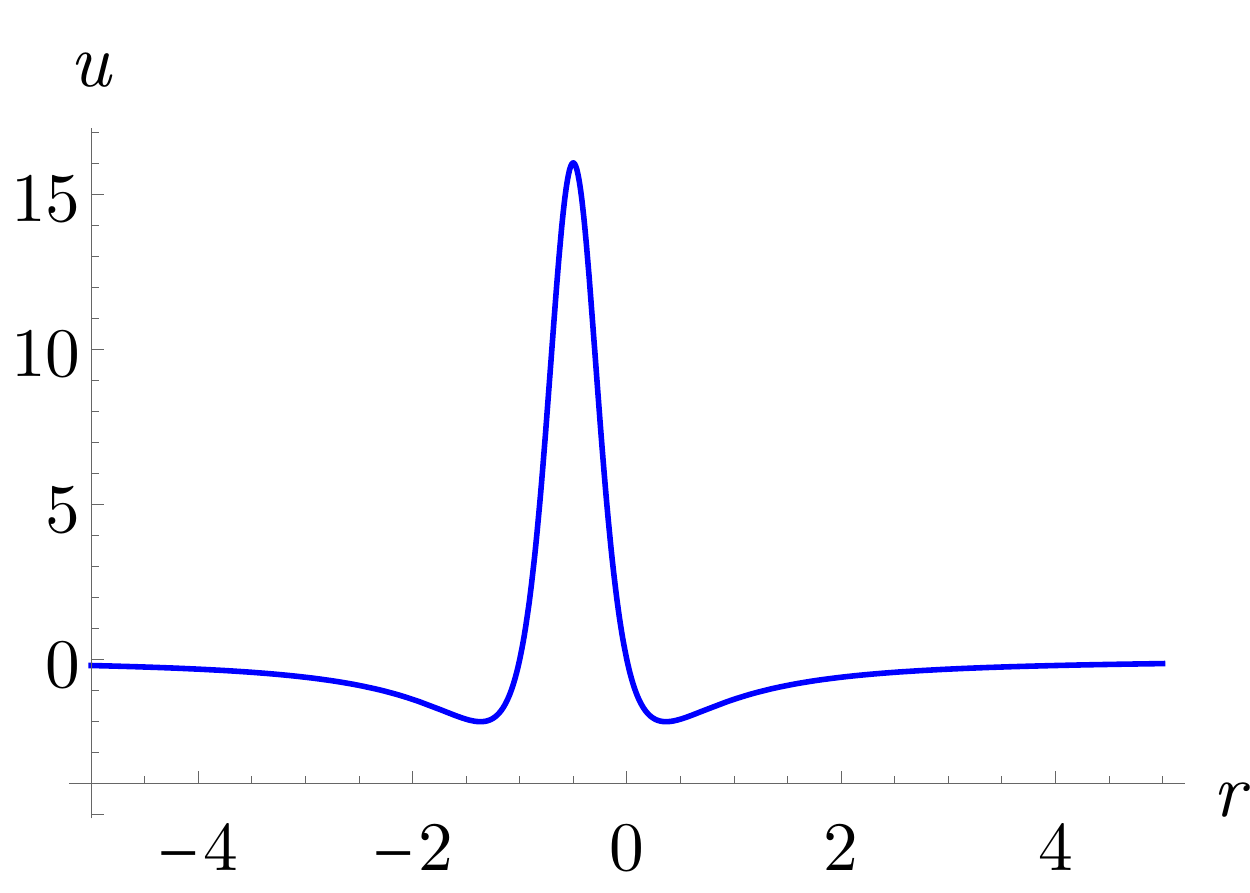}
\end{center}
\vspace{-0.2in}
\caption{$1$-lump solution of the KPI equation:\, $a=\gamma_1=0, b=1$. 
Right: Vertical crossection showing the maximum and two minima.}
\label{1lump}
\end{figure}
\subsection{1-lump solution} For $n=1$, the polynomial $F_1$ is obtained
from \eqref{square} after using $p_0=1, p_1=\theta_1$ as
\[ F_1= |\theta_1+\sf{i}{2b}|^2 +\sf{1}{4b^2} = 
(r+\sf{1}{2b}-r_0)^2+(s-s_0)^2 + \sf{1}{4b^2}, 
\qquad r_0=\Re(\gamma_1), \,\, s_0=\Im(\gamma_1) \,,\]
where \eqref{thetars} is used to obtain the second expression of $F_1$.
Then from \eqref{Fn}, the 1-lump solution is given by
\begin{equation}
u_1(x,y,t) = 2(\ln F_1)_{xx} =  
4\,\frac{-(r+\sf{1}{2b}-r_0)^2+(s-s_0)^2+\sf{1}{4b^2}}
{\big[(r+\sf{1}{2b}-r_0)^2+(s-s_0)^2+\sf{1}{4b^2}\big]^2}\,.    
\label{u1}
\end{equation}
Notice that the solution is {\it stationary} in the $rs$-plane since
the only time dependence enters via the comoving coordinates $x',y'$.
Hence, the 1-lump is a rational traveling wave with a single peak (maximum)
at $(r_0-\sf{1}{2b}, s_0)$ of height $16b^2$ and two local minima symmetrically
located from the peak at $(r_0-\sf{1}{2b}\pm \sf{\sqrt{3}}{2b})$ and
depth $-2b^2$ determined by $b$ and
the complex parameter $\gamma_1$ as illustrated in Figure~\ref{1lump}.
The wave moves in the $xy$-plane with
a uniform velocity $(3(a^2+b^2), -3a)$ at an angle $\tan^{-1}(-a/(a^2+b^2))$
with the positive $x$-axis. Since $u_1$ is the $x$-derivative of
the rational function $F_{1x}/F_1$ which decays as $|x| \to \infty$ for all $y,t$, one has
that $\int_{-\infty}^\infty u_1\,dx=0$. However, in the $xy$-plane $u_1$ is not
integrable in the Fubini-Tonelli sense. Hence, $u_1$ is not a $L^1(\mathbb{R}^2)$
function although $u_1 \in L^2(\mathbb{R}^2)$ and 
$\int\!\!\int_{\mathbb{R}^2}u_1^2 = 16\pi b$; the latter is a conserved 
quantity for the KPI equation \eqref{kp}.  

\subsection{\bf 2-lump solution}    
In this case the polynomial $F_2$ from \eqref{square} is given by 
\begin{equation}
F_2 = |\sf12((r+\sf{1}{2b})^2-s^2)+3bt+\sf{1}{8b^2}+ irs|^2 + 
\sf{1}{4b^2}|(r+\sf{1}{b})+is|^2+\sf{1}{16b^4} \,,
\label{F2}
\end{equation}
where we have set the constants $\gamma_1=\gamma_2=0$ for simplicity.
Notice that unlike $F_1$, the polynomial $F_2(r,s,t)$ does depend explicitly on $t$
so that the solution $u_2$ obtained from \eqref{Fn} 
is non-stationary in the comoving $rs$-plane. The explicit expression
for $u_2(x,y,t)$ is complicated, so it is not included here.
Figure~\ref{2lump}  illustrates that
$u_2$ consists of 2 localized lumps along the $r$-axis that are well 
separated as $t \ll 0$; these lumps get attracted to each other
and overlap when $t$ is finite, then separate
when $t \gg 0$ but along the $s$-axis. Furthermore, the height of each peak
also evolve with time and approaches the constant height of the 1-lump
solution as $|t| \to \infty$. The interaction process is an example of 
anomalous (inelastic) scattering rather than the usual solitonic interaction
of the simple $n$-lump solution of KPI.
\begin{figure}[t!]
\begin{center}
\includegraphics[scale=0.42]{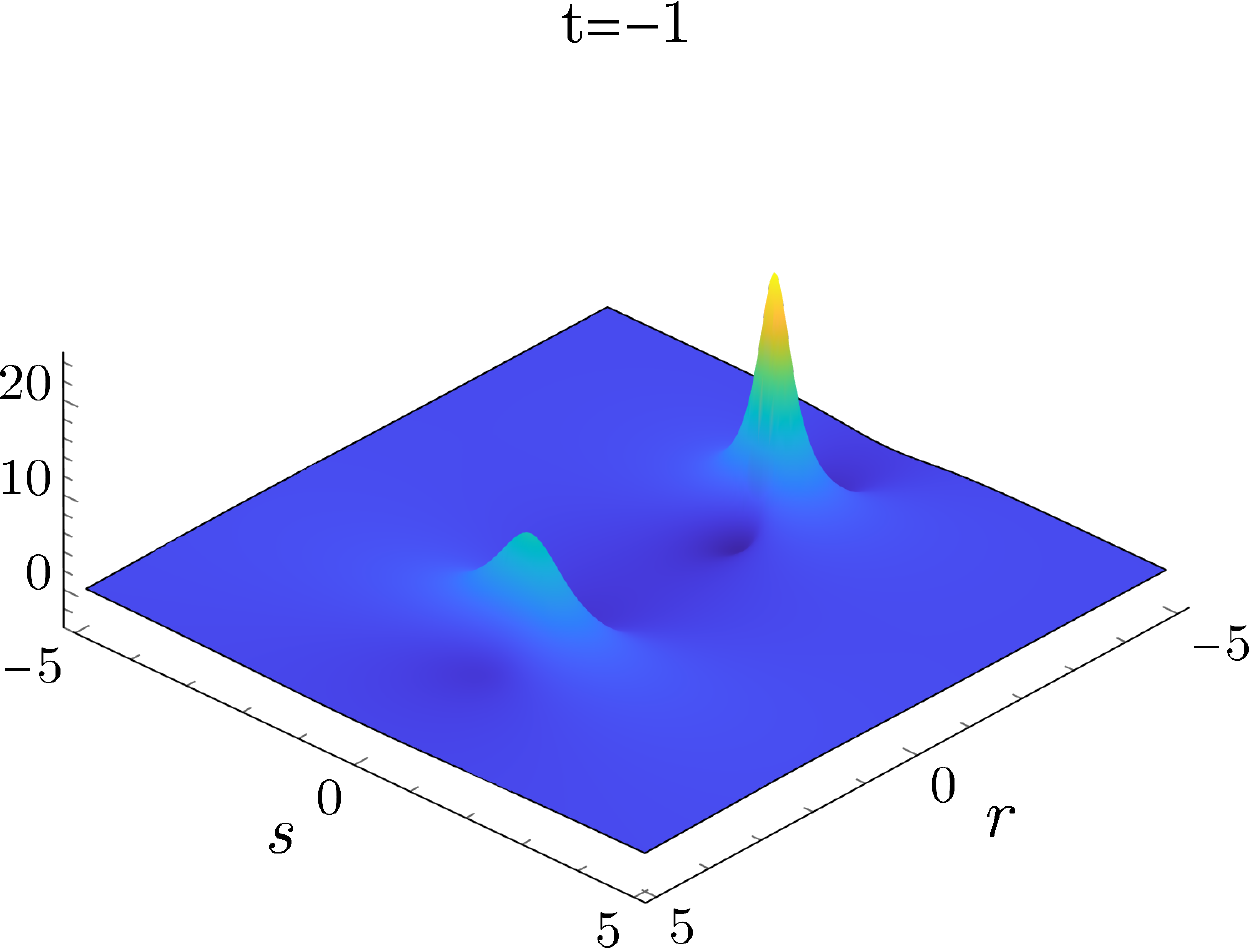} \quad 
\includegraphics[scale=0.42]{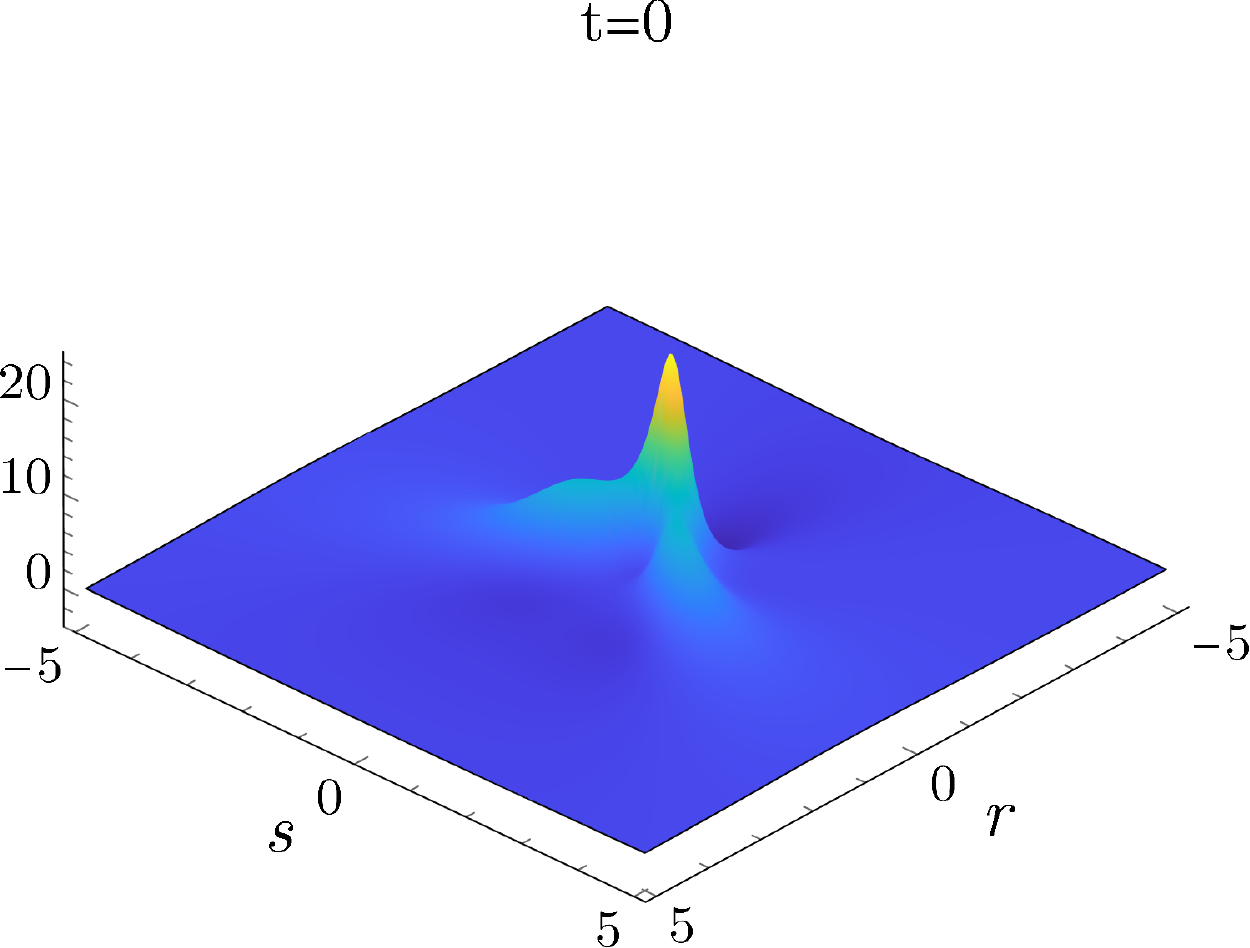} \quad 
\includegraphics[scale=0.42]{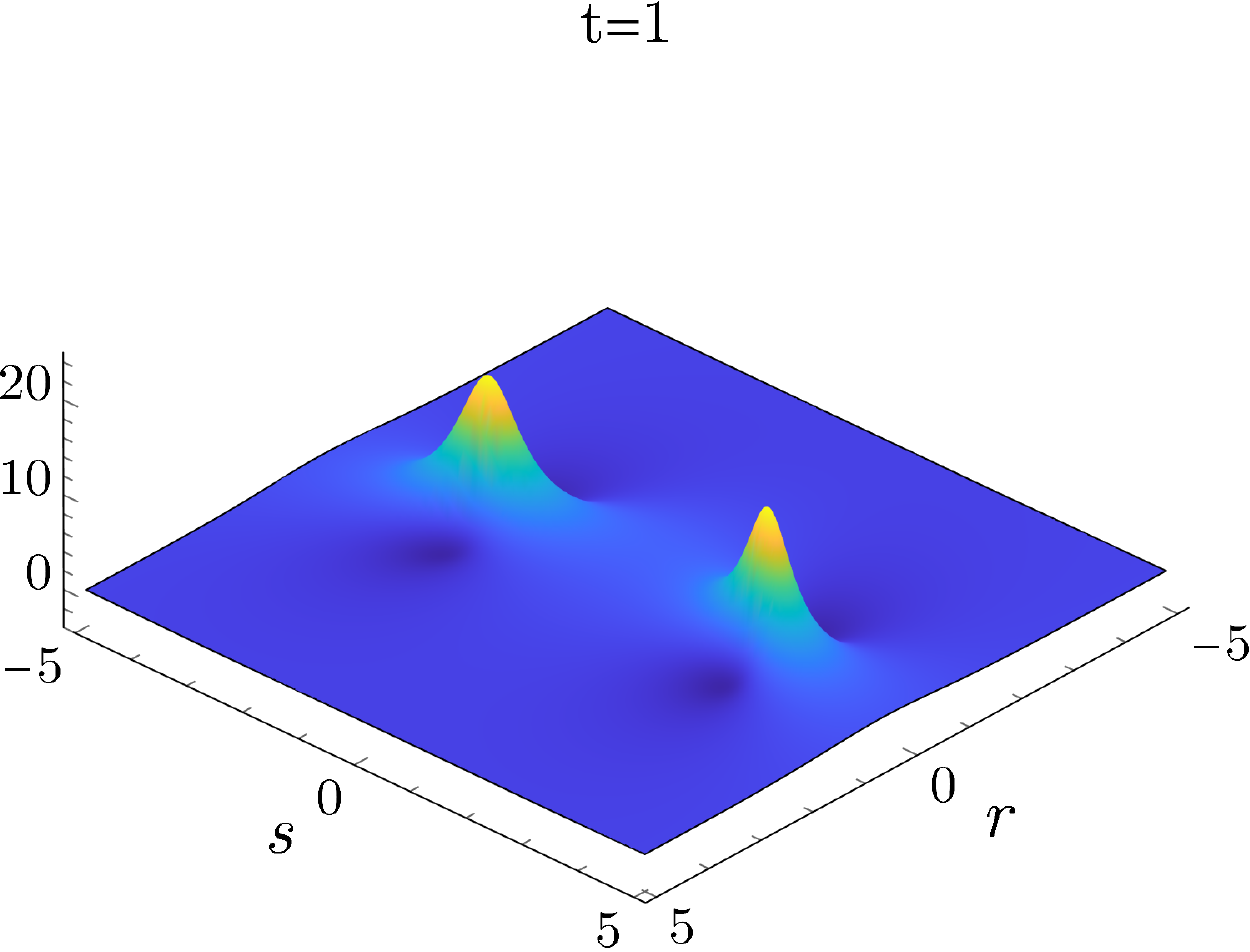}
\end{center}
\vspace{-0.2in}
\caption{$2$-lump solution of the KPI equation:\, $a=\gamma_1=\gamma_2=0, b=1$.}
\label{2lump}
\end{figure}
In order to analyze the $2$-lump interaction, the peak locations and heights
need to be determined from the local maxima of the function $u_2(x,y,t)$
but the necessary calculation is too complicated to yield exact analytical formulas.
Instead, we make certain approximations to estimate the leading order expressions
for peak locations and heights. First, we note that the expression
\[u = 2 \ln(F)_{xx} = 2\frac{FF_{xx}-F_x^2}{F^2} \]
for the solution suggests that the maximum for $u_2$ occurs approximately
near the minimum of the positive polynomial $F_2$. Secondly, the local
minima of $F_2$ are approximately located near the zeros of the leading
order polynomial $F_2^{(0)}$ which is the first of the three
$| \cdot |^2$  terms in \eqref{F2}
corresponding to the $j=0$-term in \eqref{square}. This assumption, which
will be justifed below (see also Appendix), makes the analytical computations tractable
and corroborates the interaction process depicted in Figure~\ref{2lump} above.
Setting $F_2^{(0)}=0$ and after solving for $r$ and $s$,
one obtains the approximate peak locations
\begin{equation}
(r_p, s_p) = (-\sf{1}{2b}\pm\sqrt{6b|t|-\sf{1}{4b^2}},\, 0), 
\quad t \ll 0, \qquad
(r_p, s_p) =  (0,\, \pm\sqrt{6bt+\sf{1}{2b^2}}), \quad t \gg 0 \,.
\label{F2zero}
\end{equation}
\begin{figure}[h!]
\begin{center}
\includegraphics[scale=0.42]{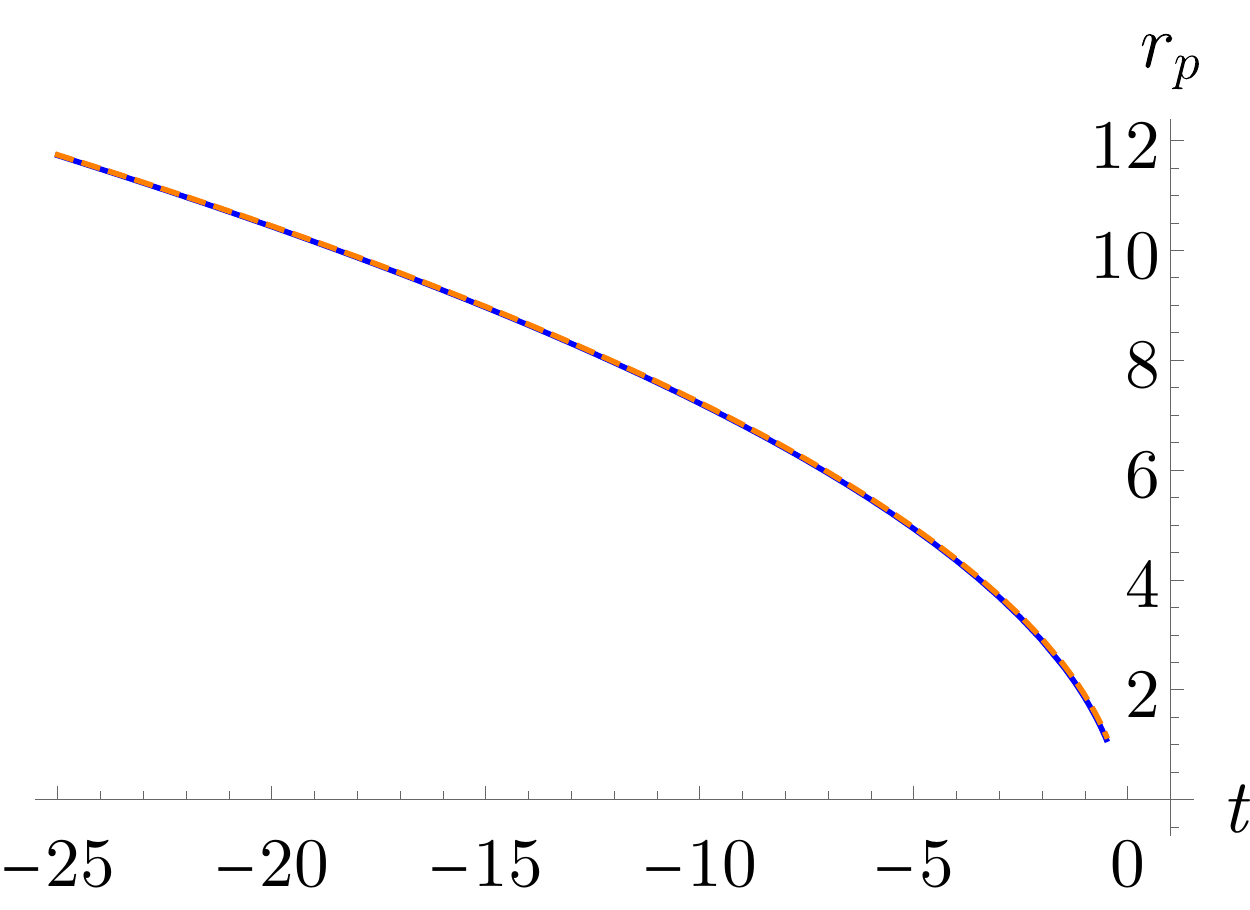} \quad 
\raisebox{0.3in}{\includegraphics[scale=0.42]{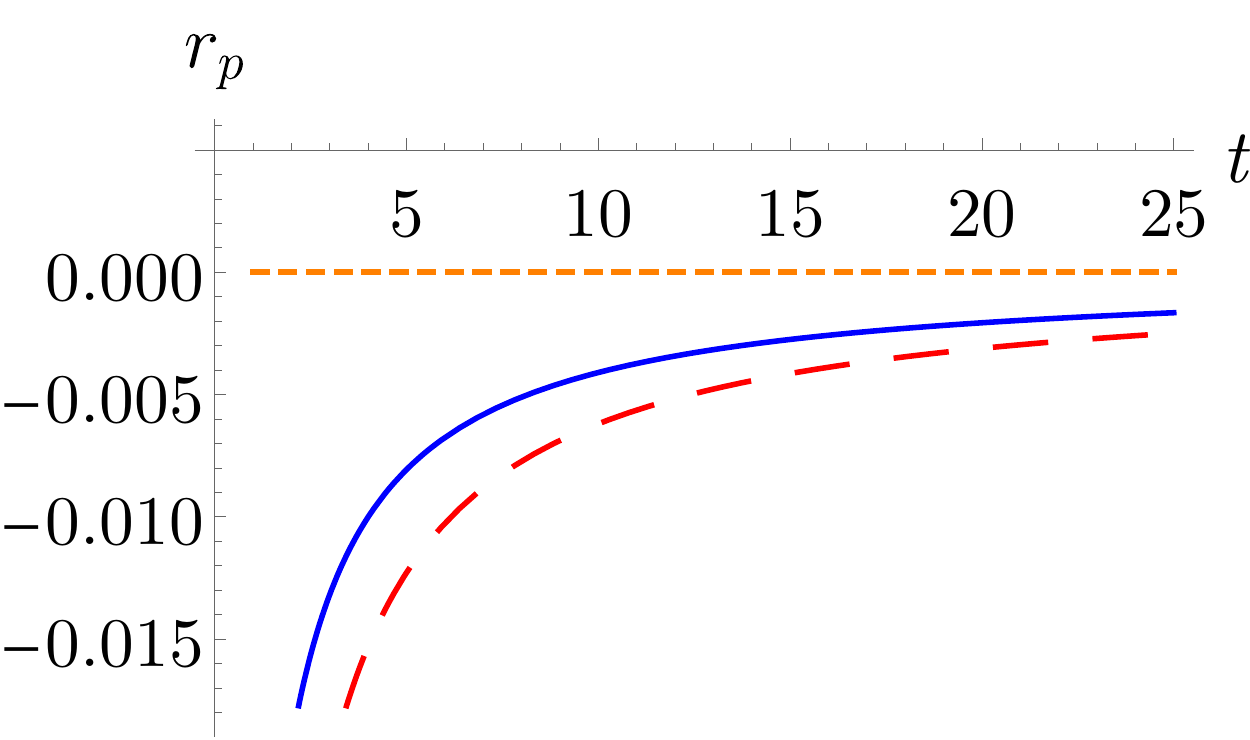}} \quad
\includegraphics[scale=0.42]{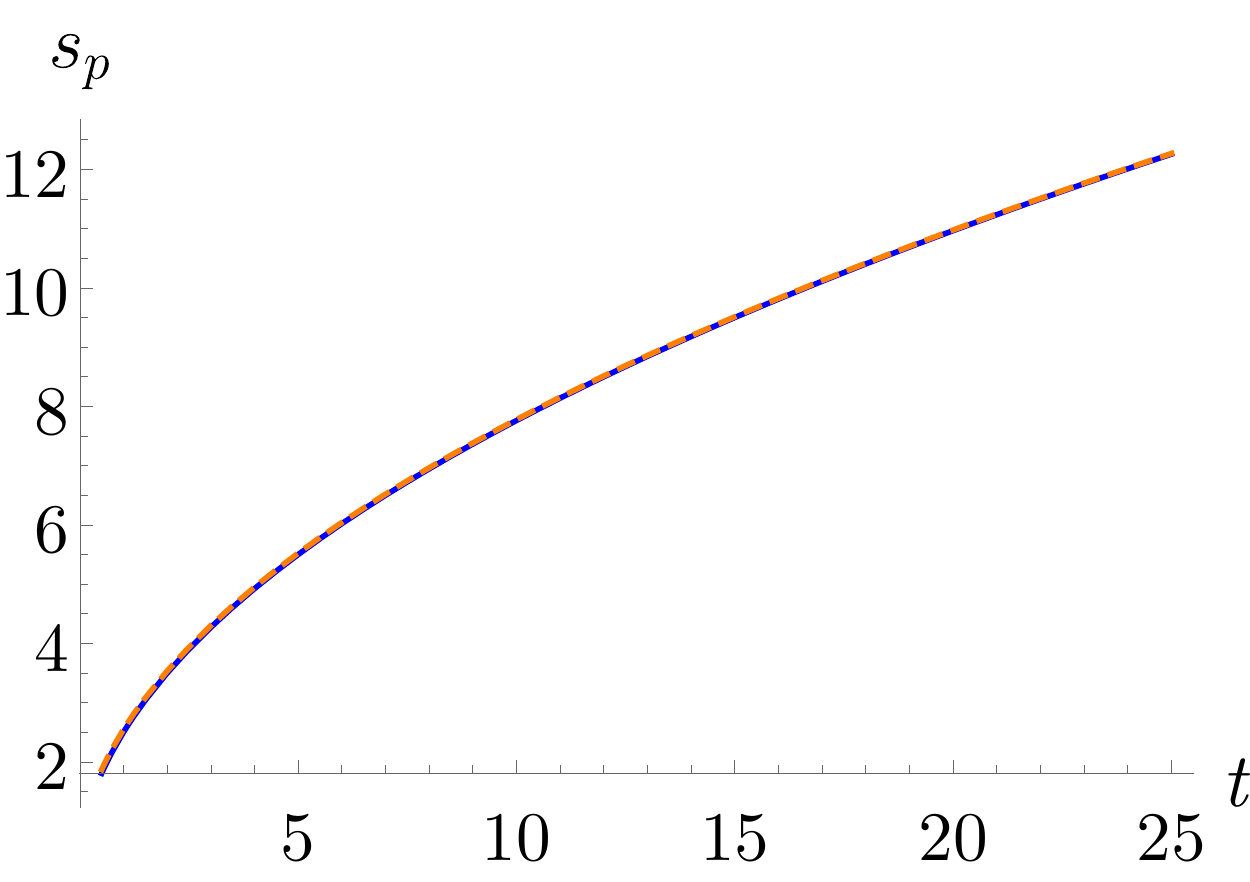}
\end{center}
\vspace{-0.2in}
\caption{$2$-lump peak locations. Exact (blue curve), approximate from
\eqref{F2zero} (orange 
dashed curve). Left panel:\, $r_p(t)$ for right peak ($s_p=0$). 
Reflecting across the line $r_p=-\sf{1}{2b}$ gives the left peak location.
Middle panel:\, $r_p(t)$ for top \& bottom peak, red-dashed line corresponds
to minimum of $F_2$. Right panel:\, $s_p(t)$ for top peak, reflecting 
across $t$-axis gives $s_p(t)$ of the bottom peak.
KP parameters: $a=\gamma_1=\gamma_2=0, b=1$.}
\label{2lumppeaks}
\end{figure}

Equation \eqref{F2zero} shows that the peaks are well-separated for $|t| \gg 0$
and verifies the anomalous interaction process
described above. It is also evident from \eqref{F2zero}
that the relative rates of attraction
(along $r$-axis) and repulsion (along $s$-axis) is $\sqrt{6b}|t|^{-1/2}$
as $|t| \to \infty$. The exact and approximate peak locations 
from \eqref{F2zero} are almost indistinguishable as shown in 
Figure~\ref{2lumppeaks} which illustrates the accuracy of \eqref{F2zero}
for large $t$.  The exact and approximate values differ
by $O(|t|^{-1/2})$ as will be shown below (and in the Appendix
for the general $n$-lump solutions). This fact is manifested
in the middle panel of Figure~\ref{2lumppeaks} which shows that the 
approximate value of $r_p(t)$ for $t \gg 0$, obtained by minimizing $F_2$ 
gives a better approximation $r_p(t)=-\sf{1}{16b^4}\sf{1}{t}+O(t^{-2})$
than $r_p(t)=0$ given by \eqref{F2zero}.
The difference between the two approximations is $O(t^{-1})$.
From \eqref{rs}, the peak locations in the $xy$-plane
are given by $(x_p(t), y_p(t)) = 
(r_p(t) - 3(a^2+b^2)|t|,\, 3a|t|)$ when $t \ll 0$, 
and the corresponding trajectory is a parabola which degenerates to the $x$-axis
when the parameter $a=0$. For $t \gg 0$, 
$(x_p(t), y_p(t)) = (3(a^2+b^2)t-\sf{a}{b}s_p(t),\, -3at+\sf{s_p(t)}{2b})$,
and the trajectory is again a parabola in the $xy$-plane.
If $a=b$, this parabola degenerates
to a straight line $x+2ay=0, \, a \neq 0$. A plot of a $2$-lump trajectory
in the $xy$-plane for $a=0$ is given in Figure~\ref{2lumptraj}.

The approximate peak heights can be calculated from the formula
$u_2(r_p,s_p)$ but since we assume that $F$ attains its minimum
value near $(r_p,s_p)$, we can take $u_2 \approx 2F_{2xx}/F_2$. Using 
\eqref{F2} and \eqref{F2zero}, we obtain the following expressions
for the approximate peak heights
\begin{equation}
u^{\pm}_{2p} = \frac{16b^2}{1+\epsilon \pm 2\sqrt{\epsilon-\epsilon^2}},
\quad t \ll 0, \quad \qquad
u_{2p} = 16b^2 - \frac{48b^2 \epsilon}{1+7\epsilon}, \quad t \gg 0\,,
\label{u2p}
\end{equation}
where $\epsilon := (24b^3|t|)^{-1}$ and $u_{2p}^{-}, u_{2p}^+$
denote the left and right peak heights, respectively. 
The peak heights approaches the $1$-lump
peak height $16b^2$ asymptotically as $|t| \to \infty$. For $t \ll 0$, as the 
lumps approach each other along the $r$-axis, the left peak 
grows while the right peak decreases with time. But when
$t \gg 0$, both peaks $u_{2p}$ have the same value which increases with time
and approaches the value $16b^2$. These features are illustrated by 
Figure~\ref{2lumppeakheights} below.  
\begin{figure}[t!]
\begin{center}
\includegraphics[scale=0.7]{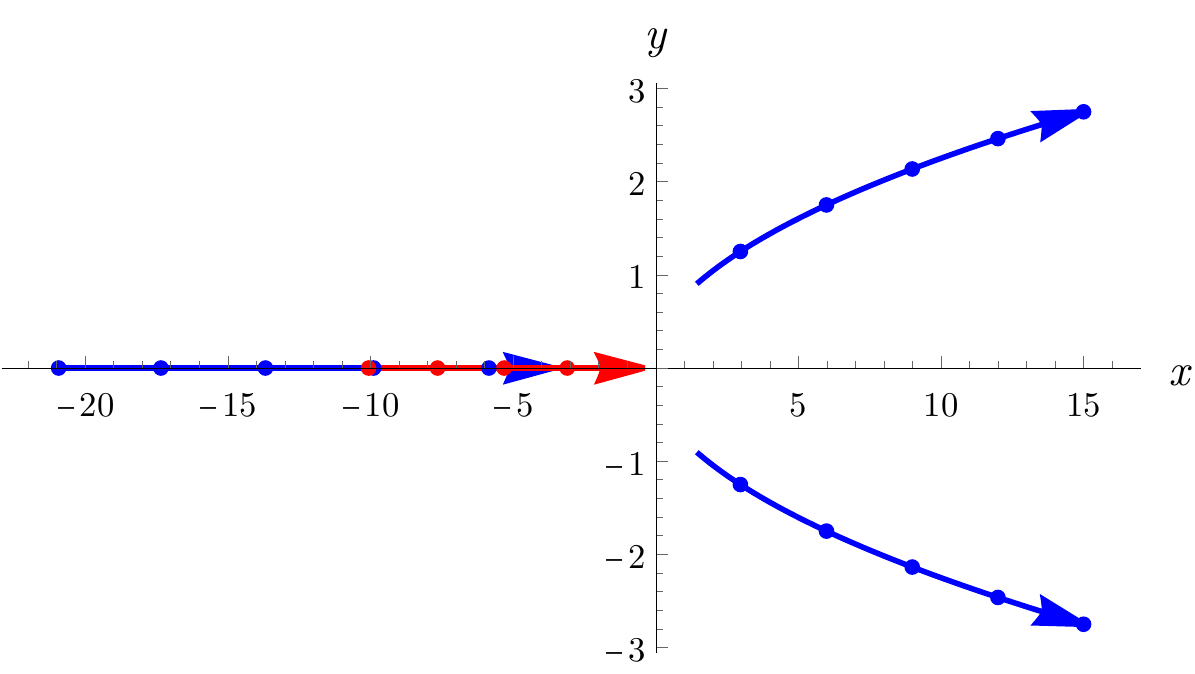} 
\end{center}
\vspace{-0.2in}
\caption{$2$-lump peak trajectories in the $xy$-plane with $a=0$. For
$t \ll 0$, the trajectories degenerate along the negative $x$-axis with
the blue dots indicating the faster peak. For $t \gg 0$, the parabolic
trajectories are shown on the right-half $xy$-plane.}
\label{2lumptraj}
\end{figure}
\begin{figure}[h!]
\begin{center}
\includegraphics[scale=0.42]{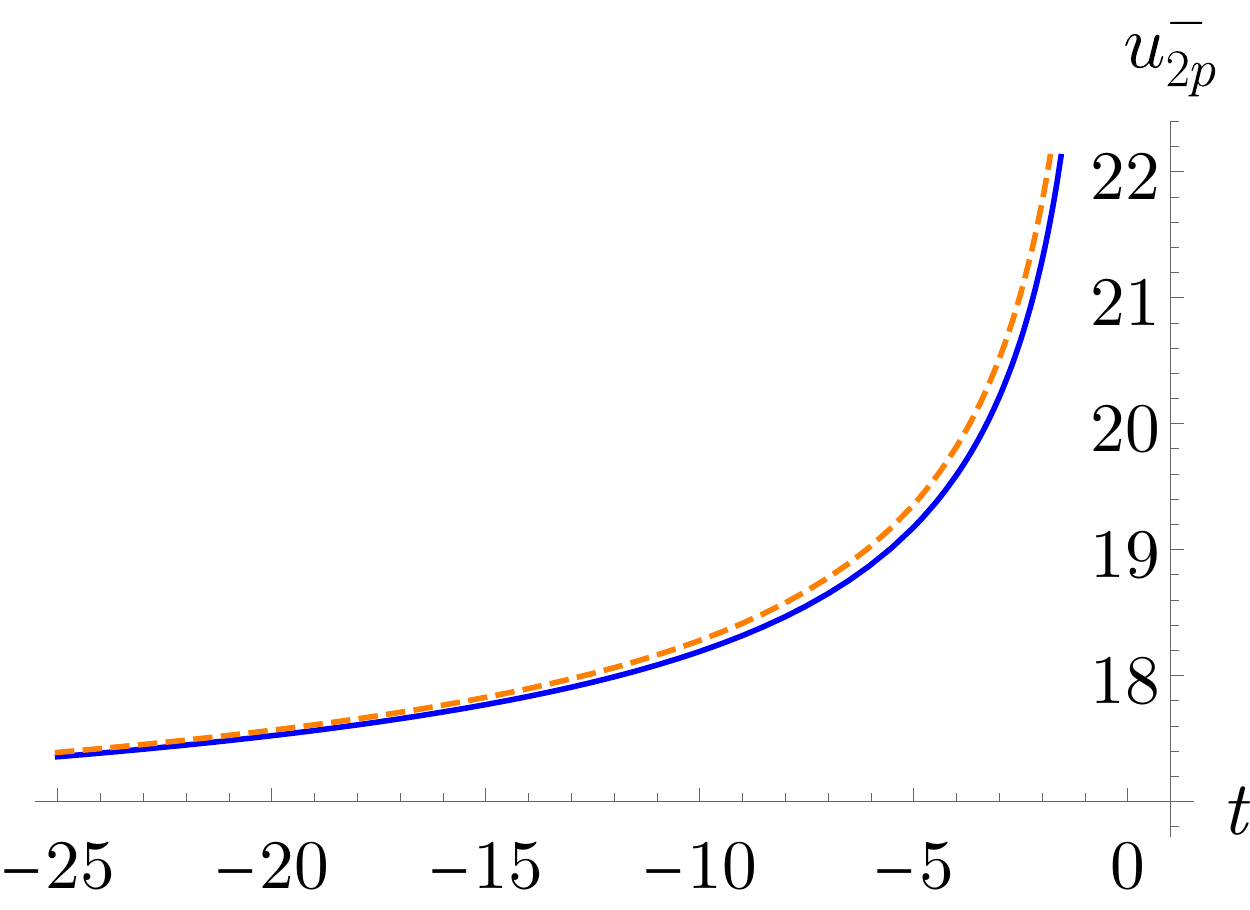} \quad
\includegraphics[scale=0.42]{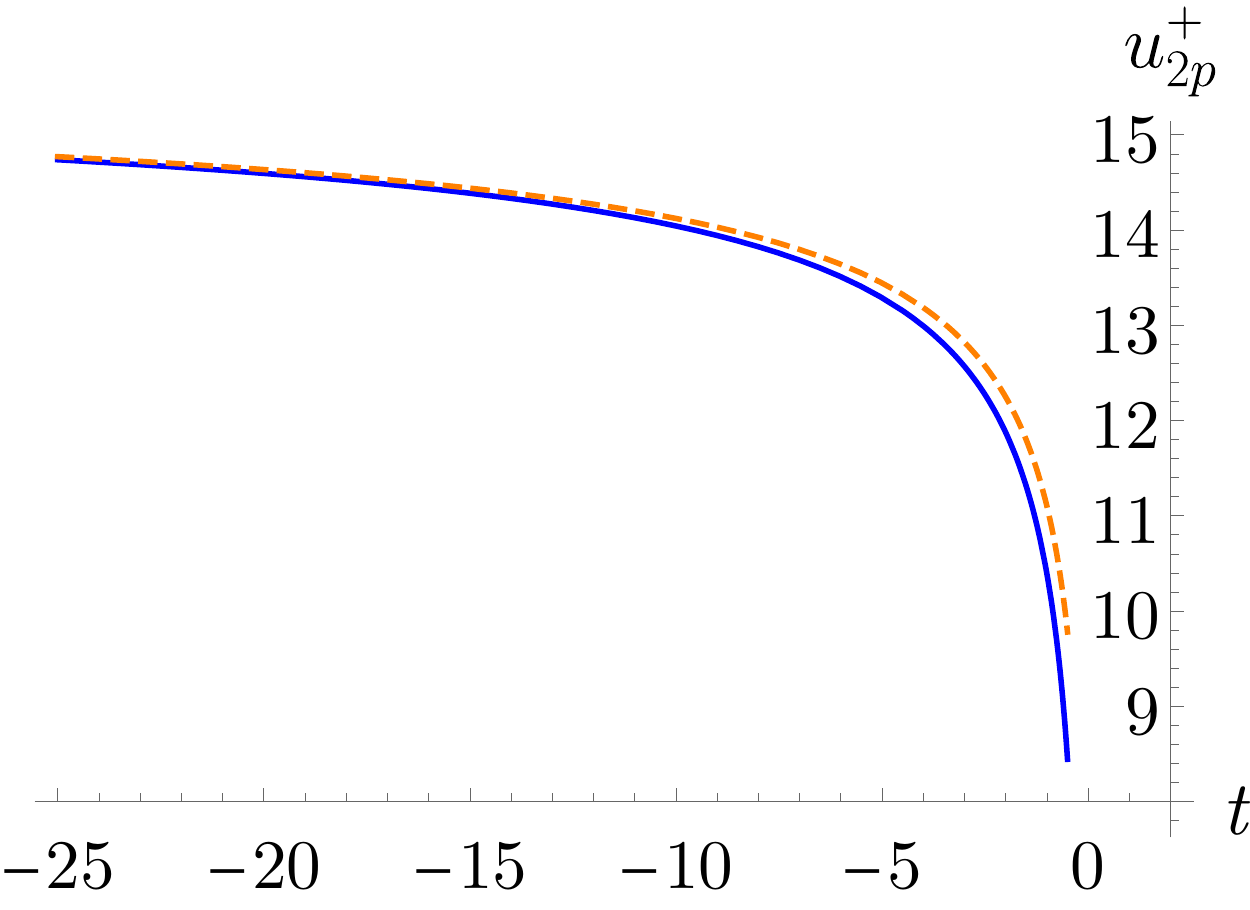} \quad
\includegraphics[scale=0.42]{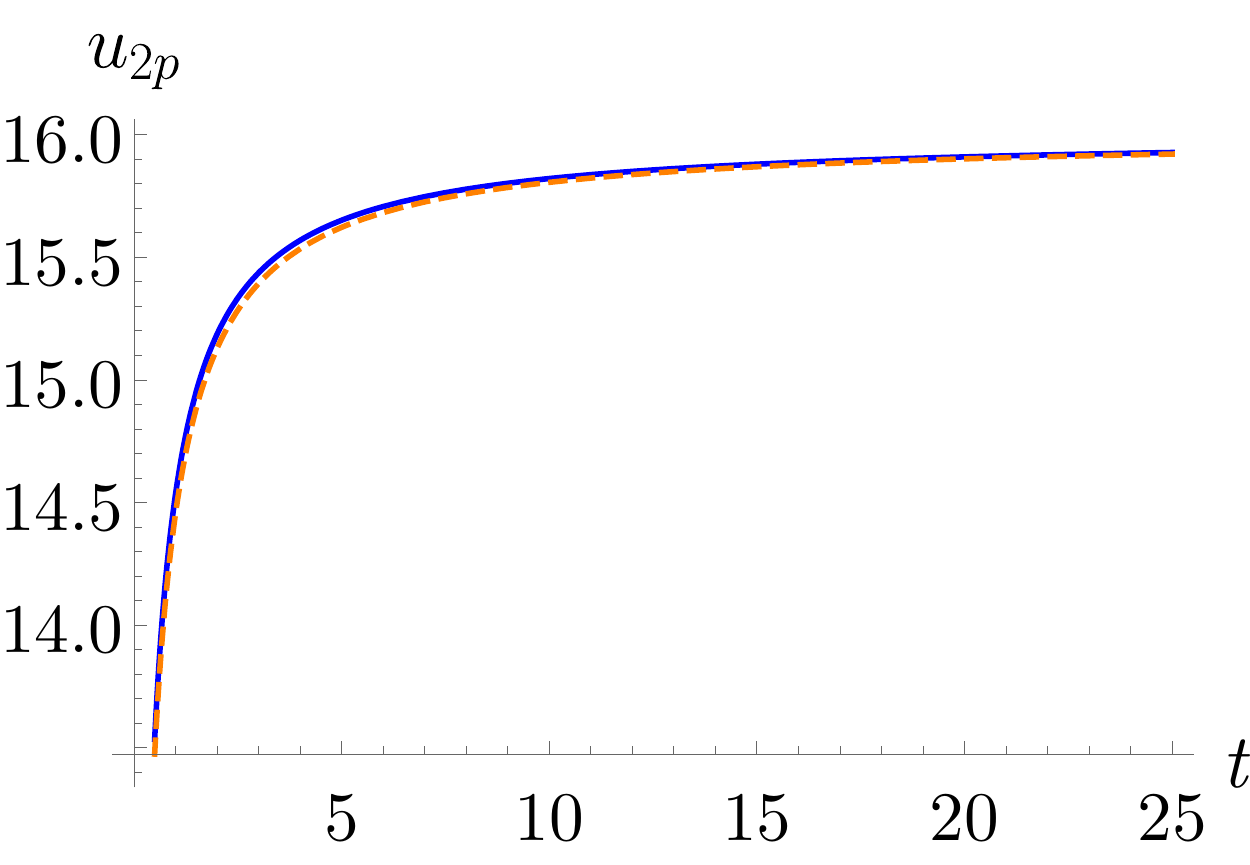}
\end{center}
\vspace{-0.2in}
\caption{$2$-lump peak heights: $u_{2p}^{\pm}$ vs $t$ (left two panels),
$u_{2p}$ vs $t$ (right panel). Exact heights -- solid blue line,
approximate heights (see \eqref{u2p}) -- orange dashed line.}
\label{2lumppeakheights}
\end{figure}

To compare with \eqref{F2zero}, 
we also list below the approximate peak 
locations $(r_p,s_p)$ obtained by minimizing $F_2$
\begin{align*}
(r_p,s_p)&= \big(\pm \sqrt{6b|t|}-\sf{1}{2b} \mp 
\sf{\sqrt{6}}{16b^{5/2}}\sf{1}{|t|^{1/2}} + O(\sf{1}{|t|}), \quad 0\big), 
\qquad t \ll 0, \\
(r_p,s_p)&= \big(-\sf{1}{16b^4}\sf{1}{t}+O(\sf{1}{t^2}),\quad \pm \sqrt{6bt} 
+ O(\sf{1}{t^{3/2}})\big), \qquad t \gg 0\,, 
\end{align*}
as well as the exact values obtained by maximizing $u_2$
\begin{align*}
(r_p,s_p)& = \big(\pm\sqrt{6b|t|} -\sf{1}{2b}
\mp\sf{\sqrt{6}}{24b^{5/2}}\sf{1}{|t|^{1/2}}+O(\sf{1}{|t|^{3/2}}), \quad 0\big), 
\qquad t\ll 0, \\
(r_p,s_p)& = \big(-\sf{1}{24b^4}\sf{1}{t} +O(\sf{1}{t^2}), \quad
\pm \sqrt{6bt} \pm \sf{\sqrt{6}}{48b^{5/2}}\sf{1}{t^{1/2}}
+O(\sf{1}{t^{3/2}})\big), \qquad t \gg 0  \,. 
\end{align*}
It is clear that \eqref{F2zero} gives the leading order peak locations
and differ from the exact values by $O(|t|^{-1/2})$. Hence, \eqref{F2zero}
can be employed to examine the asymptotic behavior of the 2-lump solution
for large $|t|$. Substituting $r=r_p+h, s=s_p+k$ into the
expression for $F_2$ in \eqref{F2}, using the first expression
from \eqref{F2zero}, and retaining
the leading order terms in $|t|$ for $t \ll 0$, yields
\[F_2 = 6b|t|\big[h^2+k^2+\sf{1}{4b^2} + O(|t|^{-1/2})\big] \,.\]
The leading order expression inside the square brackets
of the above equation is precisely the $1$-lump polynomial $F_1(h,k)$ given  
above \eqref{u1} with $h=r-r_p, k=s-s_p$. Then,
asymptotically as $t \to -\infty$, the $2$-lump solution $u_2$ can
be viewed as a superposition of two $1$-lump solutions whose peaks
are located at $(r,s) = (\pm\sqrt{6b|t|}, \, 0)$, each with height $16b^2$.      
A similar calculation shows also that $u_2$ is a superposition of two 
$1$-lumps with peaks at $(r,s)=(0, \, \pm\sqrt{6bt})$ and height $16b^2$, 
asymptotically as $t \to \infty$ so that the $L^2$-norm
$\int\!\!\int_{\mathbb{R}^2}u_2^2 = 2(16\pi b)$ for all $t$.
\paragraph{Remarks}  
\begin{itemize}
\item[(a)] If the parameters $(\gamma_1,\gamma_2) \neq (0,0)$
then each of the peak locations $(r_p,s_p)$ shifts by a constant $(r_0,s_0)$
and the trajectories of the peaks are hyperbolas in
the $rs$-plane given by $(r_p-r_0)(s_p-s_0) =$ constant depending on 
$\gamma_1, \gamma_2$. Otherwise, the asymptotic behavior of the peaks
remain essentially the same as above.
\item[(b)] The $2$-lump dynamics was discussed in earlier
studies~\cite{GPS93,VA99} as well.
But a more detailed discussion of the asymptotic
dynamics of the $2$-lump solution including the time evolution 
of the peak heights, are presented in this paper.
\end{itemize}
\begin{figure}[h!]
\begin{center}
\includegraphics[scale=0.44]{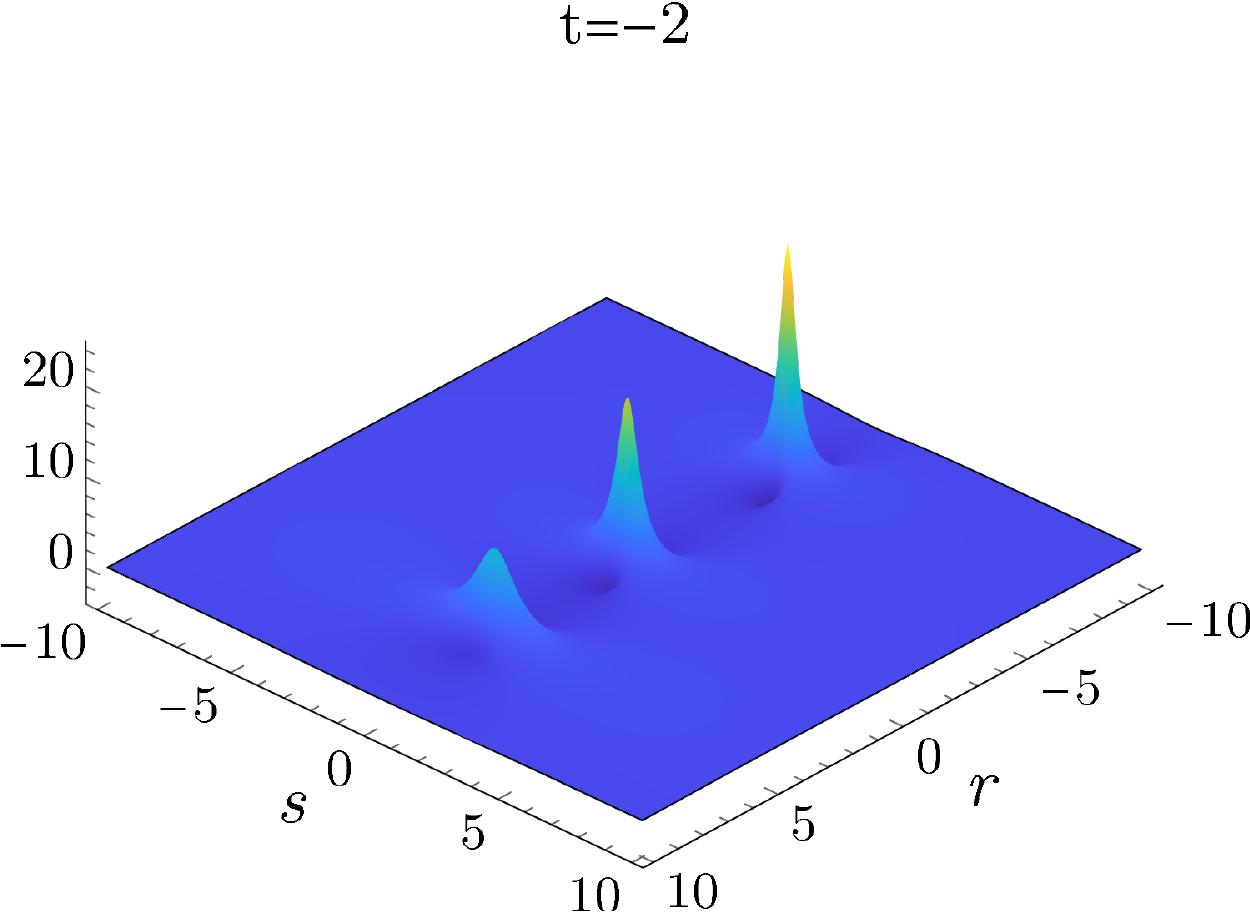} 
\includegraphics[scale=0.44]{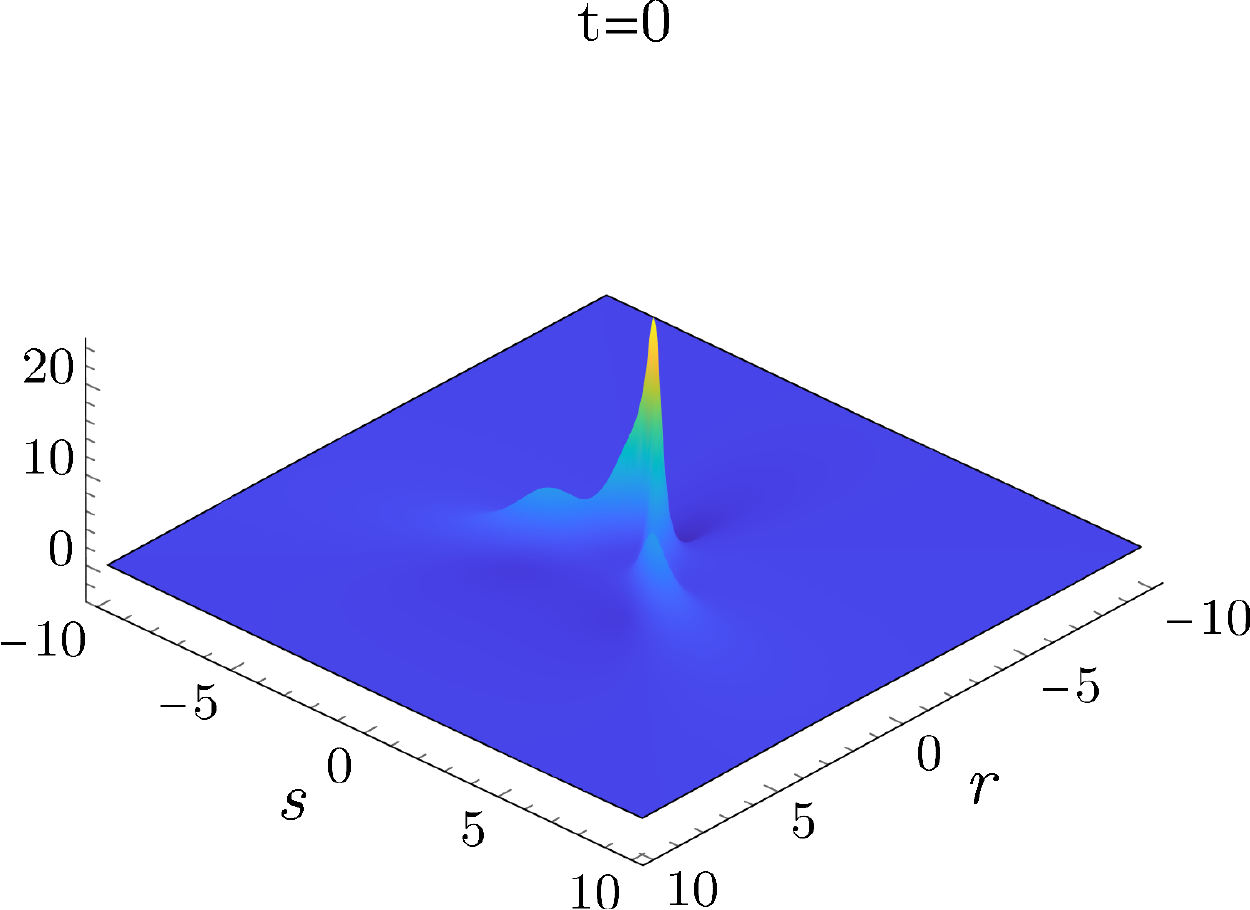} 
\includegraphics[scale=0.44]{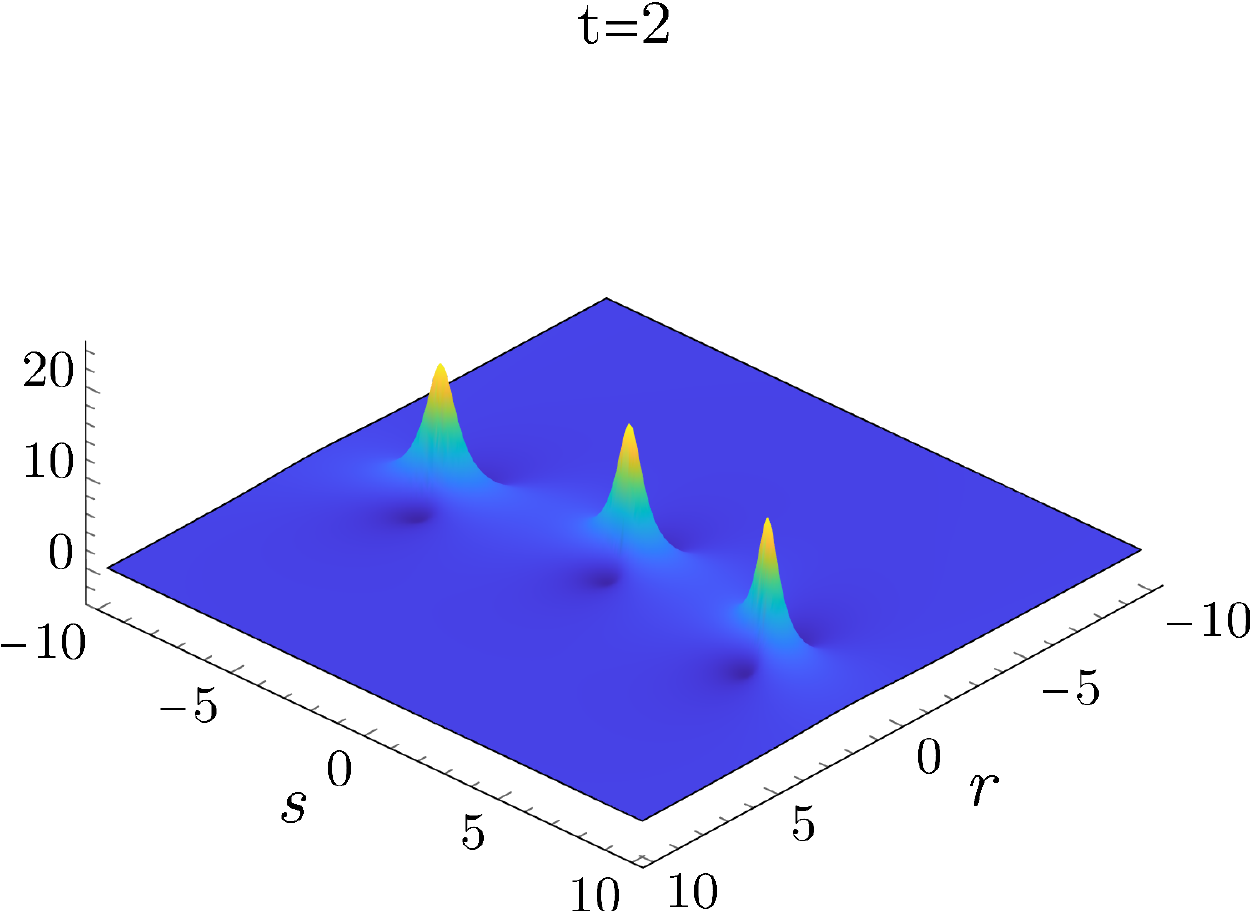}
\end{center}
\vspace{-0.2in}
\caption{$3$-lump solution of the KPI equation:\, 
$a=\gamma_1=\gamma_2=\gamma_3=0, b=1$.}
\label{3lump}
\end{figure}
\subsection{$3$-lump solution}
The $3$-lump solutions and its asymptotic dynamics can be analyzed in
the same manner as in the $2$-lump case. Therefore, we omit most of the details
and illustrate the dynamics via Figure~\ref{3lump} which shows
the anomalous scattering of a $3$-lump solution as the peaks
approach along the $r$-axis and scatter along the perpendicular
$s$-axis. The main feature of this solution (and any odd $n$-lump)
is that the central peak remains on the $r$-axis throughout
the interaction and moves at a much slower rate than 
the peaks on its either side. 

Equation \eqref{square} with
$n=3$ gives the polynomial $F_3$ whose leading ($j=0$) term is 
\begin{equation}
F_3^{(0)} = |\sf{1}{6}r(r^2-3s^2) + \sf{1}{4b}(r^2+s^2)+r(3bt+\sf{1}{4b^2})+
\sf12(t+\sf{1}{4b^3}) + i\,\sf{s}{6}(3r^2-s^2+18bt)|^2 \,,  
\label{F3}
\end{equation}
where we have set the parameters $\gamma_1=\gamma_2=\gamma_3=0$, for simplicity.
The approximate location of the peaks are obtained from solving the cubic
polynomial equations for $r$ and $s$, resulting from the real and 
imaginary parts of $F_3^{(0)}=0$ in \eqref{F3}.
The apporoximate peak locations are given by
\begin{align} 
&(r_p, s_p) \approx (\pm\sqrt{18b|t|}-\sf{2}{3b}, \,\, 0),  
\qquad &(r_p^c,s_p^c) \approx (-\sf{1}{6b}+\sf{29}{972b^4}\sf{1}{|t|}, \,\, 0),
\quad t \ll 0, \nonumber \\
&(r_p, s_p) \approx \big(\sf{5}{6b}+ \sf{83}{1944b^4}\sf{1}{t}, \,\, 
\pm(\sqrt{18bt}+\sf{25\sqrt{2}}{144b^{5/2}}\sf{1}{t^{1/2}})\big), \qquad
&(r_p^c,s_p^c) \approx (-\sf{1}{6b}-\sf{29}{972b^4}\sf{1}{t}, \, 0),
 \quad t \gg 0 \,,  
\label{F3zero}
\end{align} 
where $(r_p,s_p)$ denote the right (or left) peak locations
if $t \ll 0$, or the top (or bottom) peak locations if $t \gg 0$, and 
$(r_p^c,s_p^c)$ denotes the central peak locations.    
\begin{figure}[t!]
\begin{center}
\includegraphics[scale=0.5]{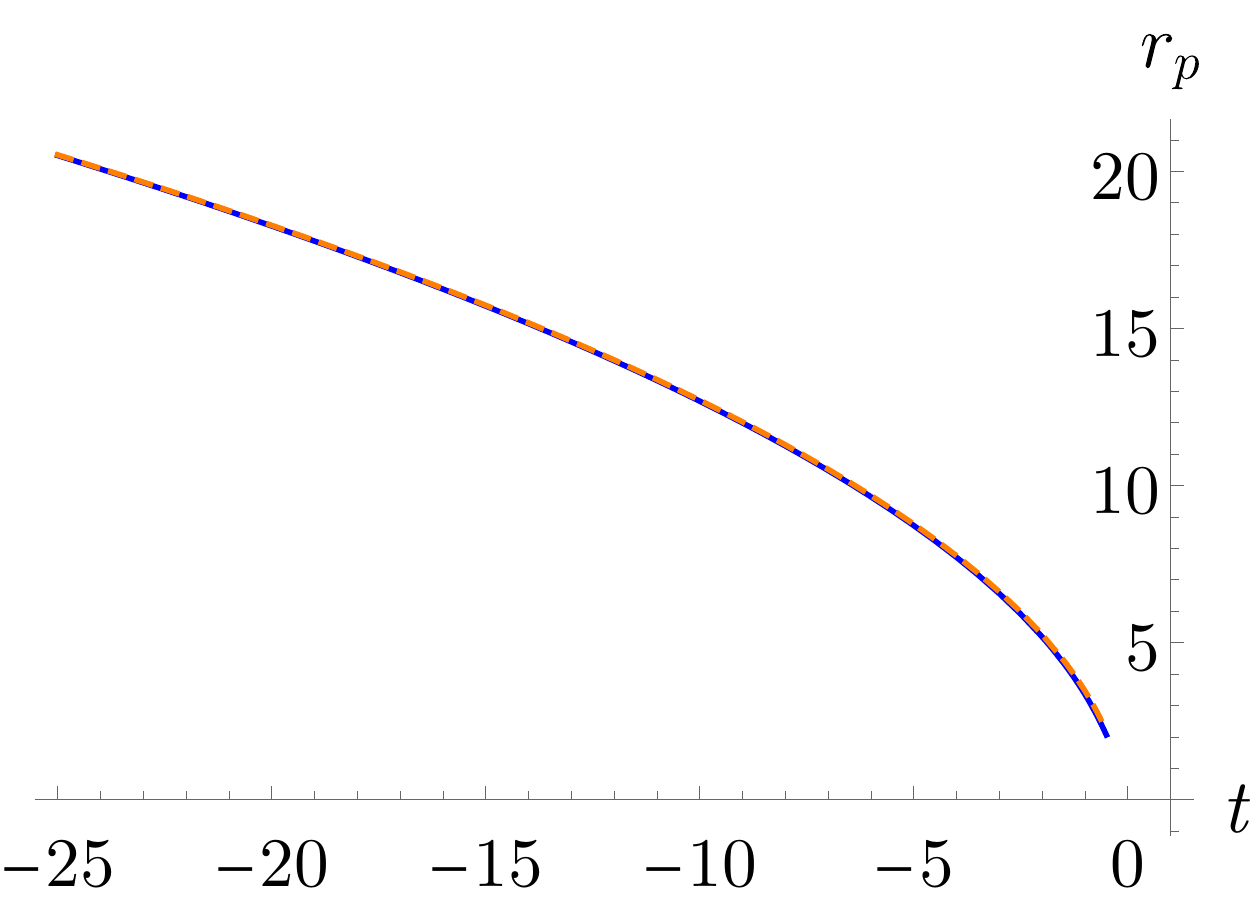} \quad \qquad
\includegraphics[scale=0.5]{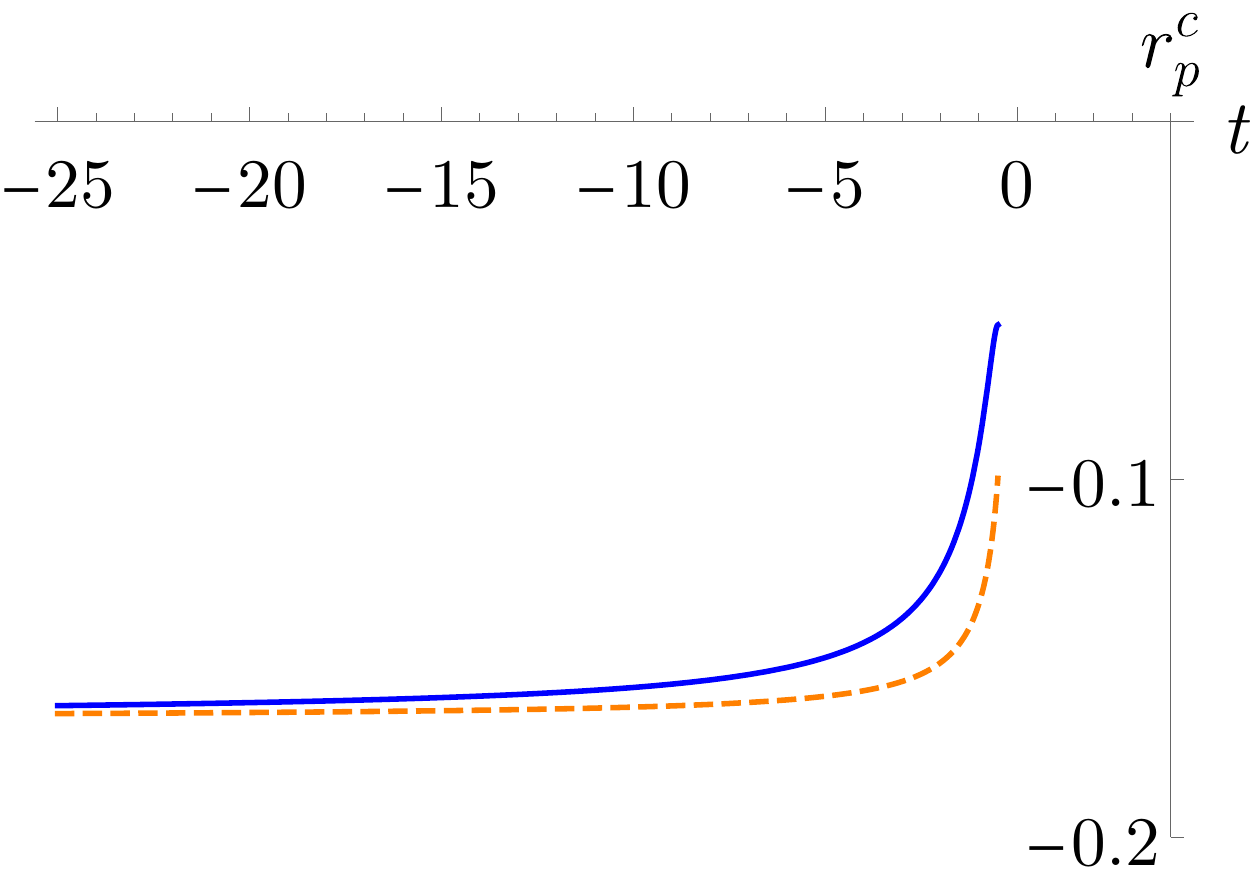}
\end{center}
\vspace{-0.2in}
\caption{Peak locations ($t \ll0$). Exact (blue), approximate (orange dashed).
Left panel:\, right peak, reflection across line $r_p=-\sf{2}{3b}$
gives the left peak location. Right panel:\, central peak.
KP parameters as in Figure~\ref{3lump}. }
\label{3lumppeaksN}
\end{figure}
The exact and approximate peak locations from \eqref{F3zero} 
plotted in Figures~\ref{3lumppeaksN} and \ref{3lumppeaksP}
are generally in good agreement for large $|t|$.
However, since the exact and approximate values differ by $O(|t|^{-1/2})$ 
like the $2$-lump case, the approximate value
of $r_p(t)$ for $t \gg 0$, obtained by minimizing $F_3$ gives
a better approximation $r_p(t)=\sf{5}{6b}-\sf{451}{6^5}\sf{1}{t}$
than the corresponding formula in \eqref{F3zero}.
The difference between the two approximations is in the $O(t^{-1})$ term.

When $t \ll 0$, the three peaks are well separated along the
$r$-axis with the central peak between $r= -\sf{1}{6b}$ and $r=0$.
Then with increasing time the left and right peaks approach each other 
and overlap with the central peak near the origin. As time evolves,
the $3$-lump solution then splits again into three distinct
peaks, the central peak remaining on the $r$-axis while 
the other two peaks separate from each other along the $s$-axis.
All three peak velocities have a small $r$-component which is the 
same for the top and bottom peaks while the central peak travels
a bit slower as can be seen from the right panel of Figure~\ref{3lump}.
Notice that the relative rate of attraction (repulsion)
between the left, right (top, bottom) is $\sqrt{18b}|t|^{-1/2}$
asymptotically as $|t| \to \infty$. The peak trajectories in the
$xy$-plane can be obtained using \eqref{rs} and \eqref{F3zero}
as in the $2$-lump case but are not shown here.
\begin{figure}[h!]
\begin{center}
\includegraphics[scale=0.42]{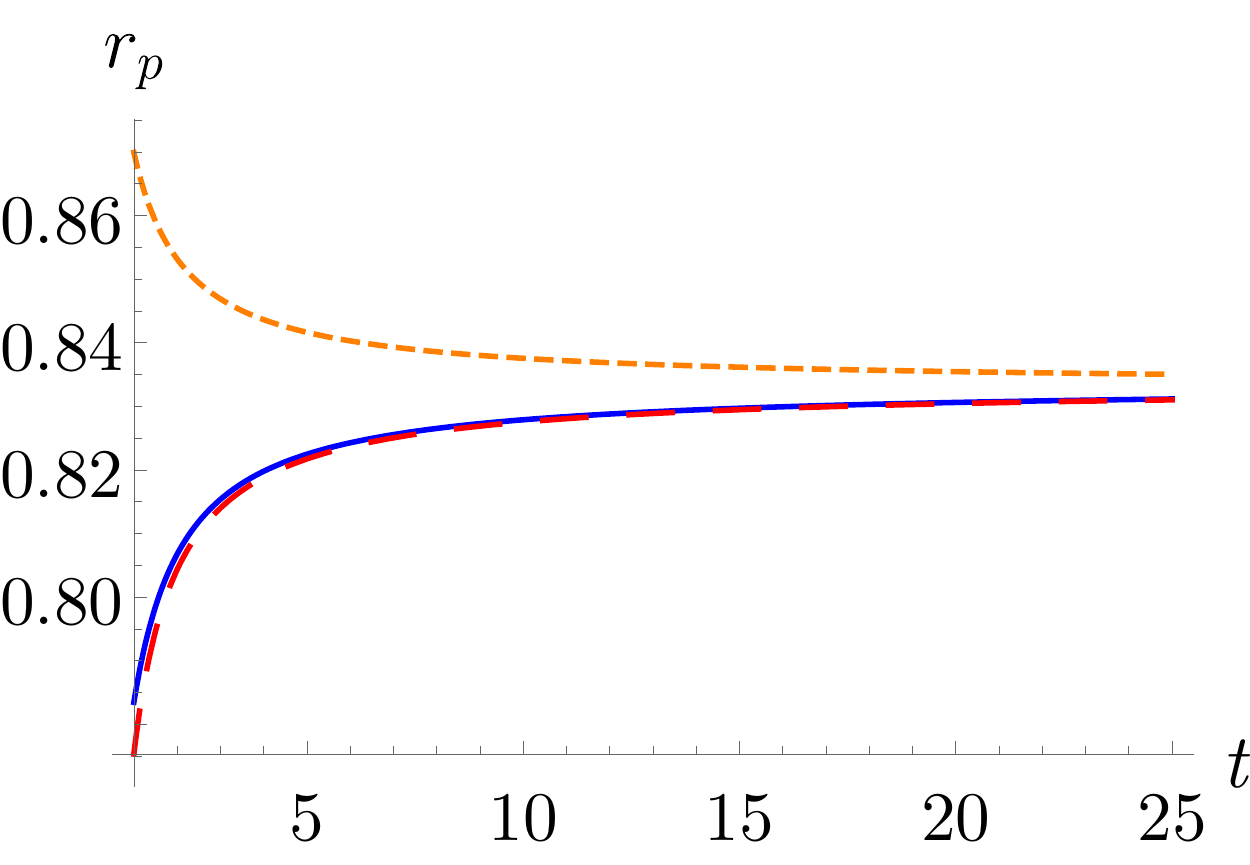} \quad 
\includegraphics[scale=0.42]{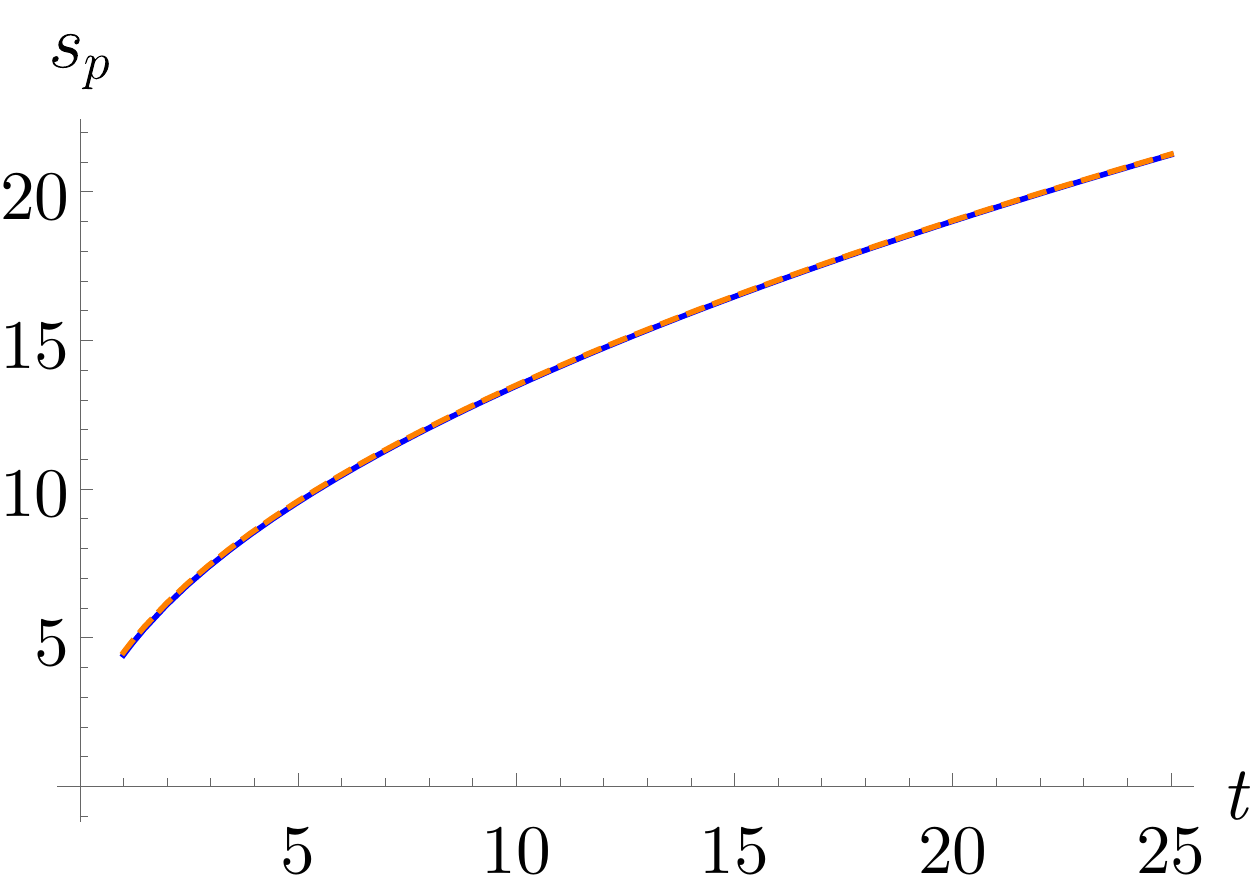} \quad 
\includegraphics[scale=0.42]{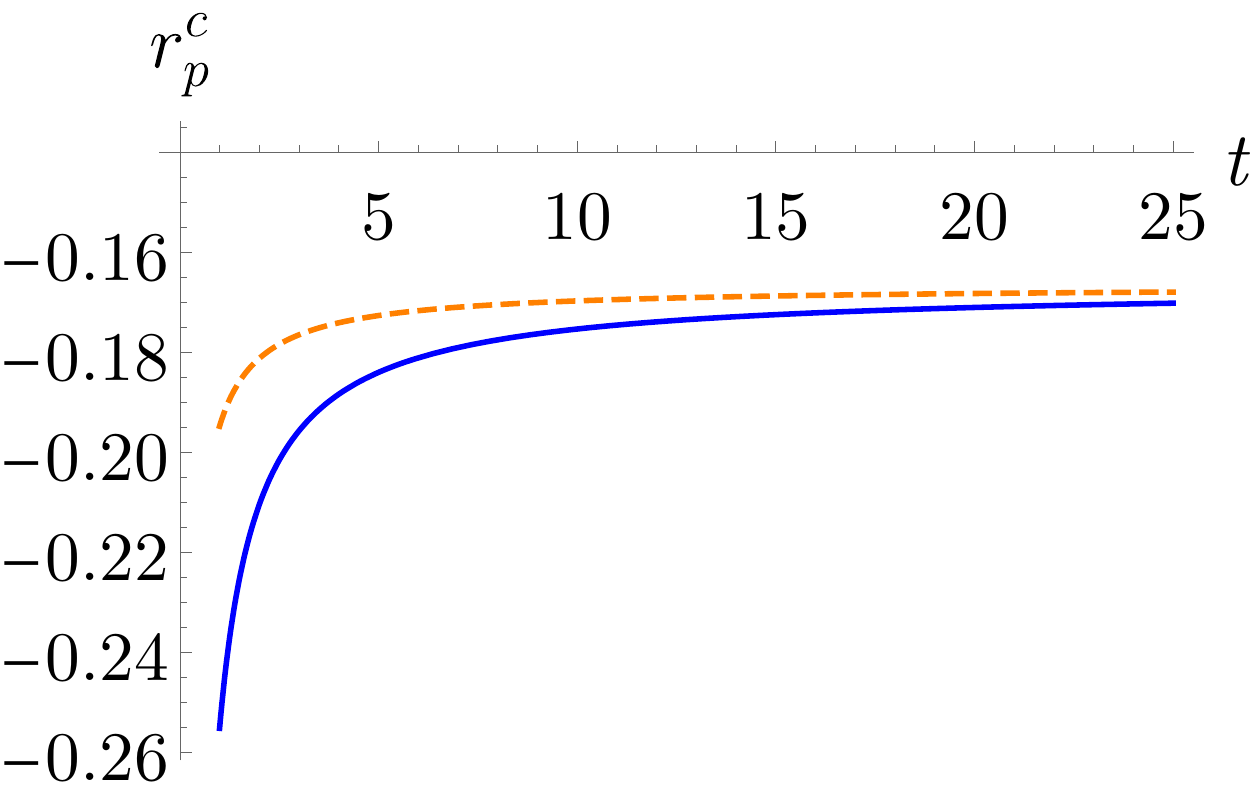}
\end{center}
\vspace{-0.2in}
\caption{Peak locations ($t \gg 0$). Exact -- solid blue, 
approximate from \eqref{F3zero} -- orange dashed. Left panel:\, $r_p(t)$ 
for top and bottom peaks, the red-dashed curve corresponds to minimum of $F_3$.
Middle panel:\, $s_p(t)$ (top peak) = $-s_p(t)$ (bottom peak).
Right panel:\, central peak.
KP parameters: same as in Figure~\ref{3lump}. }
\label{3lumppeaksP}
\end{figure}

The approximate peak heights are obtained by computing 
$u_{3p} = u_3(r_p,s_p)$ with $(r_p,s_p)$ from \eqref{F3zero}
like in the $2$-lump case. These are given by
\begin{align*}
&u_{3p}^{\pm} \approx 16b^2 \big(1 \mp \sf{1}{\sqrt{2b^3}}\sf{1}{|t|^{1/2}}
+\sf{5}{18b^3}\sf{1}{|t|}\big), \qquad
&u_{3p}^c \approx 16b^2\big(1+\sf{7}{36b^3}\sf{1}{|t|}\big), 
\quad t \ll 0,  \\
&u_{3p}^\pm \approx 16b^2 \big(1-\sf{5}{18b^3}\sf{1}{t} \big), \qquad 
&u_{3p}^c \approx 16b^2\big(1-\sf{7}{36b^3}\sf{1}{t} \big), 
\quad t \gg 0 \,,   
\end{align*}
where $u_{3p}^\pm, u_{3p}^c$ denote right or left (top or bottom)
and the central peak heights, respectively.
There are a few features to note from the above formulas which are
plotted in Figure~\ref{3lumppeakheights}.
First, as $t \to -\infty$ each peak
height approach the asymptotic value of $16b^2$. Secondly, both the 
left and the central peak grow 
as time evolves, whereas the right peak height decreases.
Moreover, the left peak grows at a faster rate than the central peak 
so that for $t \ll 0$, the peak heights are {\it ordered} with the left peak 
being the highest. After interaction, all the peaks grow at approximately
the same rate as each peak height approach $1$-lump value
of $16b^2$ as $t \to \infty$.

Finally, we remark that by using local coordinates 
$(r,s)=(r_p+h, s_p+k)$ near each peak, the polynomial $F_3$ reduces
to leading order in $t$, to a $1$-lump polynomial $F_1(r,s)$.
This implies that the $3$-lump solution $u_3$ is a superposition
of three $1$-lump solution as $|t| \to \infty$. In fact, it will be
shown in Section 4 that the $n$-lump solution is a superposition of $n$ 
$1$-lump solutions as $|t| \to \infty$. However, there are no 
asymptotic shifts in the peak centers before and after collision 
as in solitonic interactions e.g., the KdV solitons.
\begin{figure}[h!]
\begin{center}
\includegraphics[scale=0.42]{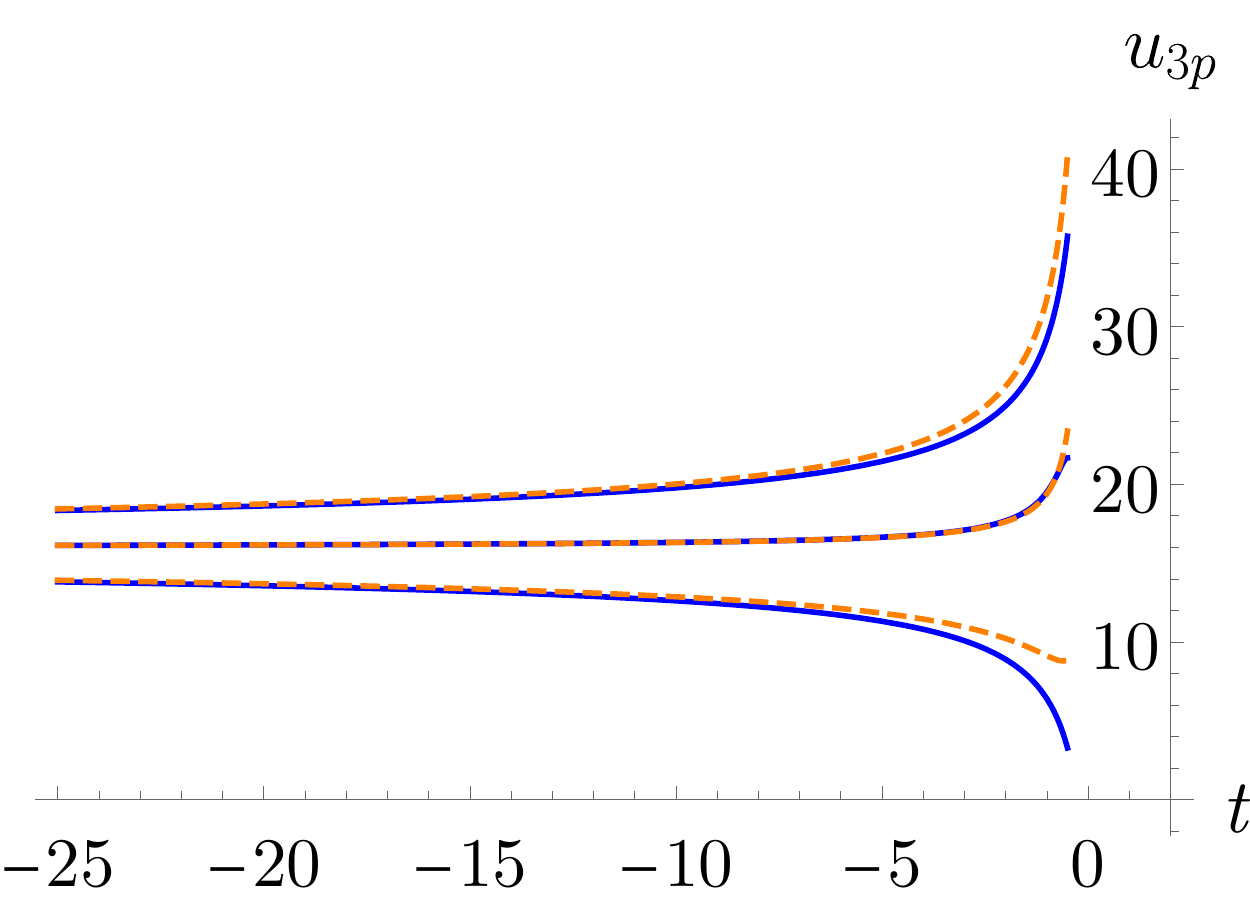} \quad
\includegraphics[scale=0.42]{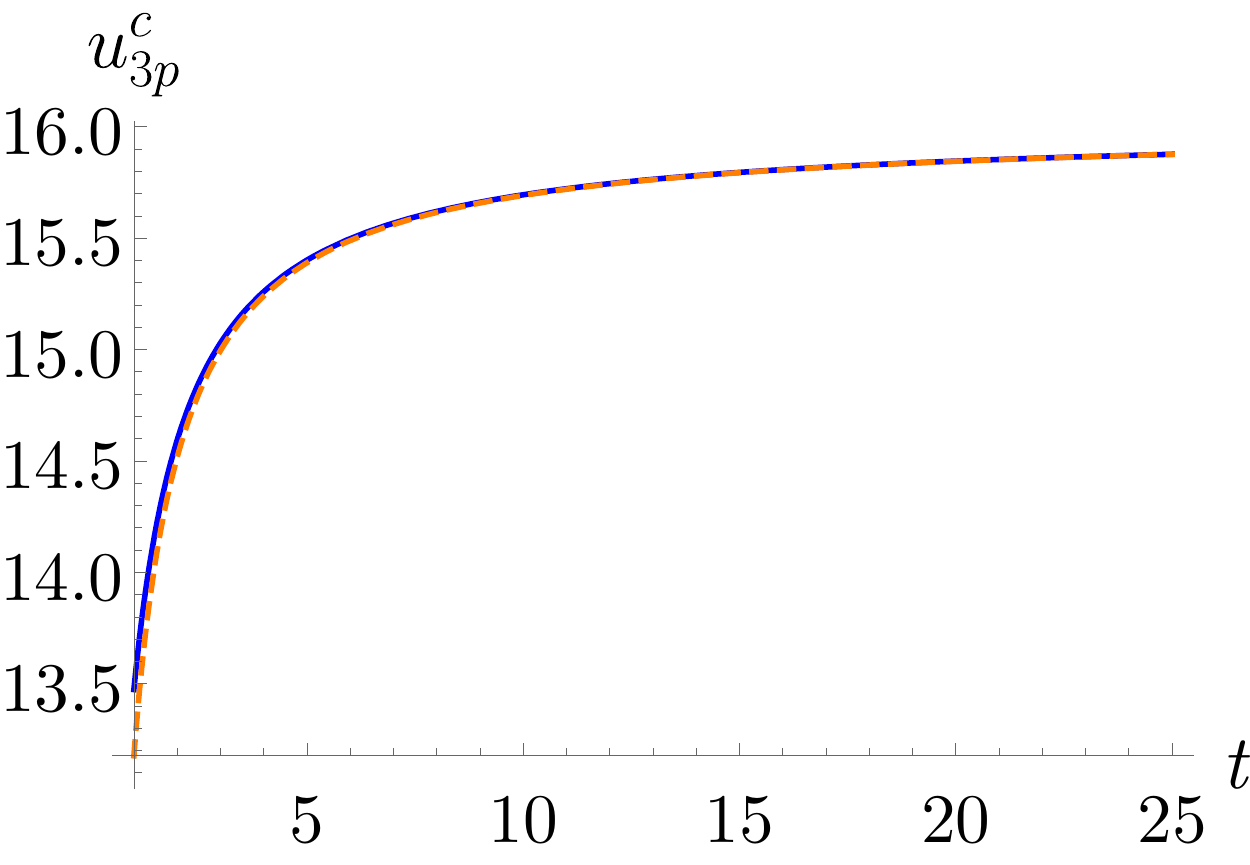} \quad
\includegraphics[scale=0.42]{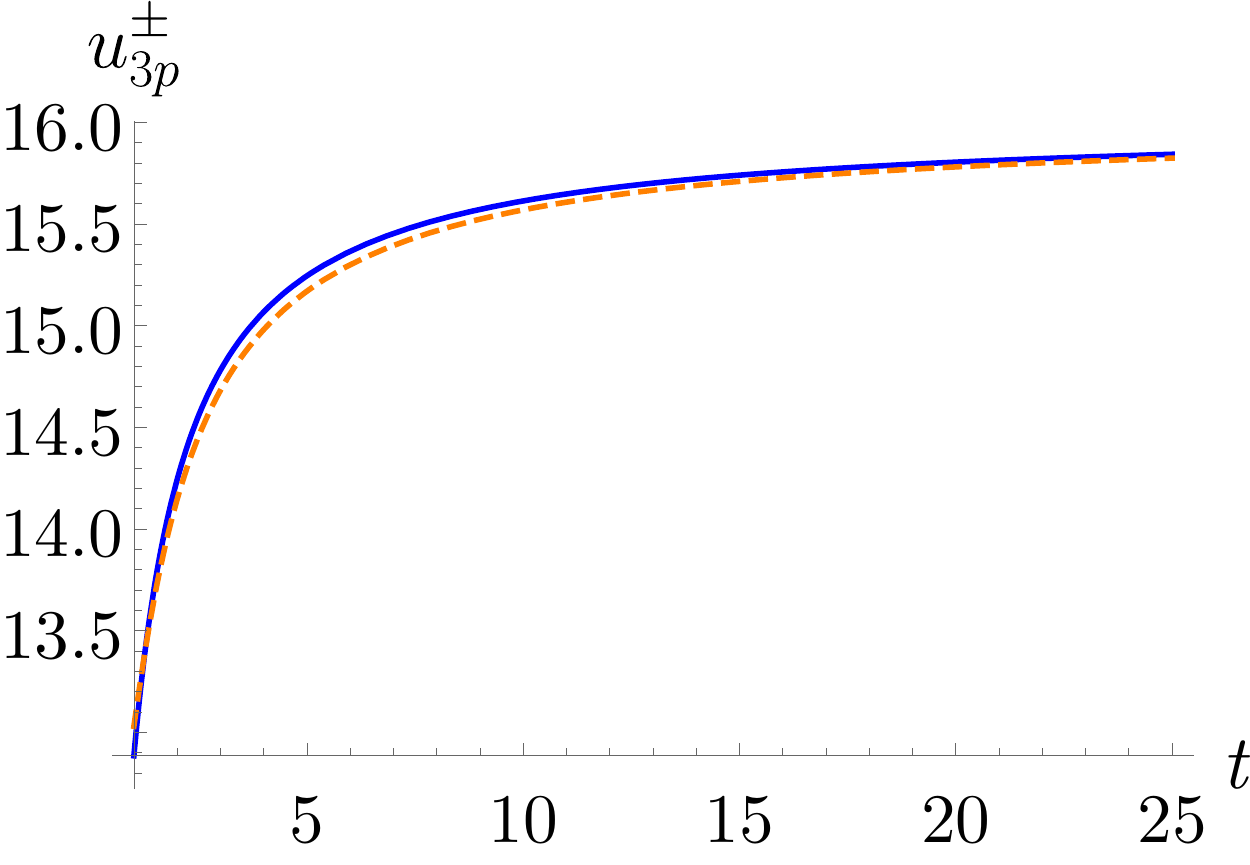}
\end{center}
\vspace{-0.2in}
\caption{Evolution of $3$-lump peak heights. Left panel:\, $u_{3p}^-$ (top plot), 
$u_{3p}^c$ (middle plot), and $u_{3p}^+$ (bottom plot) for $t \ll 0$. 
Middle panel:\, $u_{3p}^c$, right panel:\, $u_{3p}^\pm$ for $t \gg 0$.   
KP parameters: same as in Figure~\ref{3lump}.}
\label{3lumppeakheights}
\end{figure}
\section{$n$-lump asymptotics}
It is evident from the examples in Section 3 that the $n$-lump solution
for $n=2,3$ separates into $n$ distinct peaks whose heights approach the
$1$-lump peak height asymptotically as $|t| \to \infty$. Moreover, the
peak locations scale as $|t|^{1/2}$ and admit an asymptotic expansion
as $|t| \to \infty$ in the form of 
\[ z_j(t) := r_j(t) + is_j(t) \sim |t|^{1/2}\big(\xi_{j0} + \xi_{j1} \epsilon
+\xi_{j2} \epsilon^2 + \cdots\,, \big)\,, \qquad \epsilon = |t|^{-1/2} \,,\]     
where $z_j(t)$ is the $j^{\mathrm th}$ peak location, $j=1,2,\ldots,n$.
In this section, we will show that the features of the $n$-lump
solutions discussed in Section 3, also hold for 
any positive integer $n$.
The key point of the analysis is the fact that to leading order, the peak 
locations are given by $F_n^{(0)}(r,s,t)=0$ 
where $F_n^{(0)}$ is the $j=0$ term in the outer sum of \eqref{square}.
This fact will be justified in the Appendix.

Utilizing \eqref{propc} the sum inside $|\cdot|^2$ of $F_n^{(0)}$
can be expressed as a {\it single} generalized Schur polynomial 
\begin{equation}
 \sum_{l=0}^n(\sf{i}{2b})^lp_{n-l} = 
p_n(\theta_1+h_1, \theta_2+h_2, \cdots, \theta_n+h_n) := \tilde{p}_n
\label{pnt}
\end{equation}
by appropriately choosing the $h_j$'s such that
$p_j(h_1,\cdots,h_j) = (\sf{i}{2b})^j$. Thus $p_1(h_1)=h_1=\sf{i}{2b}$
so that from \eqref{thetars}, $\theta_1+h_1 = i(r+\sf{1}{2b}+is+\gamma_1)$
and all other $h_j$'s can be computed successively.   
Since $F_n^{(0)}=0$ implies $\tilde{p}_n(r,s,t)=0$, the approximate peak locations
are simply given by the zeros of the generalized Schur polynomial $\tilde{p}_n$,
to leading order in $|t|$. Recall from \eqref{thetars}, that the $t$-dependence 
in $\tilde{p}_n$ occurs via $\theta_2, \theta_3$ which are linear in $t$.
Then from \eqref{pn},
\[ \tilde{p}_n(r,s,t) \sim \sum_{m_1,m_2,m_3 \geq 0}
\frac{\theta_1^{m_1}\theta_2^{m_2}\theta_3^{m_3}}{m_1!m_2!m_3!}\,,
\qquad \text{with} \quad m_1+2m_2+3m_3=n \,,\]
when $|t| \gg 0$ since $\theta_j+h_j$ for $j>3$ are constants.
In order to find the terms with highest power of $|t|$ in  
$\tilde{p}_n$, one needs to maximize $m_2+m_3$ subject 
to $m_1+2m_2+3m_3 = n$. It is convenient to consider even and odd $n$ cases separately.
When $n=2m$, the maximizer is $(m_1,m_2,m_3)=(0,m,0)$ so that
the term with largest power of $|t|$
arises from $\theta_2^m/m!$ which is $O(|t|^m)$. When $n=2m+1$, there are 
two cases:\,
(i)\, $(m_1,m_2,m_3)=(1,m,0)$ and (ii)\, $(m_1,m_2,m_3)=(0,m-1,1)$ leading to the terms
$\sf{1}{m!}\theta_1\theta_2^m$ and $\sf{1}{(m-1)!}\theta_2^{m-1}\theta_3$ 
respectively, in $\tilde{p}_n$.
The highest power of $|t|$ in both those terms are $|t|^m$. 
Then the dominant balance required to solve the equation
\[\tilde{p}_n(r,s,t)=0 \]
asymptotically for $|t| \gg 0$ arises from the following two cases:  
\[ (a)\, \theta_1^{2m} \sim \theta_2^m, \,\, n=2m, \quad \text{and} \quad  
\theta_1^{2m+1} \sim \theta_1\theta_2^m, \,\, n=2m+1 \qquad
(b)\, \theta_1\theta_2^m \sim \theta_2^{m-1}\theta_3, \,\, n=2m+1 \,.\]
Case (a) gives the
scaling $\theta_1 \sim \theta_2^{1/2} \sim O(|t|^{1/2})$ as seen earlier for
the left/right (top/bottom) peak locations for the $2$- and $3$-lump solutions,
whereas case (b) implies $\theta_1 \sim O(1)$ as seen for the central peak locations
for the $3$-lump solution. Collecting all the dominant terms which are
$O(|t|^{n/2})$ from $\tilde{p}_n$ for case (a), yields
\begin{subequations}
\begin{equation}
\tilde{p}_n(r(t),s(t),t) \sim \sum_{m_1,m_2\geq 0}\frac{\theta_1^{m_1}\theta_2^{m_2}}{m_1!m_2!}
+O(|t|^{(n-1)/2}) \sim \,\, 
i^n\!\!\!\!\sum_{m_1,m_2\geq 0}\frac{z^{m_1}(3bt)^{m_2}}{m_1!m_2!} + O(|t|^{(n-1)/2}) \,,
\label{pnasymp-a}
\end{equation}
with $m_1+2m_2 = n$. The second expression above is obtained
by retaining the dominant terms in $\theta_1, \theta_2$ from \eqref{thetars} 
and defining $z(t) = r(t)+is(t)$ which 
is $O(|t|^{1/2})$. Similarly, the dominant terms corresponding
to case (b) are given by
\begin{equation}
\tilde{p}_n(r(t),s(t),t) \sim \frac{\tilde{\theta}_1\tilde{\theta}_2^m}{m!} + 
\frac{\tilde{\theta}_2^{m-1}\tilde{\theta}_3}{(m-1)!} + O(|t|^{m-1}) \sim 
i^n\left(\frac{(3bt)^mz}{m!} - \frac{(3bt)^{m-1}t}{(m-1)!}  
+ O(|t|^{m-1}) \right) \,, 
\label{pnasymp-b}
\end{equation}  
where $2m+1 = n, \, \tilde{\theta}_j = \theta_j+h_j, \, j=1,2,3$ and 
$z(t)=(r(t)-r_0)+\sf{1}{2b}+i(s(t)-s_0)$. Note that since 
$\theta_1 \sim O(1)$ in this case, all constant terms are retained 
leading to a different expression of $z(t)$ than in \eqref{pnasymp-a}.
The constants $r_0, s_0$ are determined by the 
parameters $\gamma_1, \gamma_2$, as usual.
\end{subequations}
\subsection{Asymptotic peak locations} 
First we consider case (b) above where $n$ is odd and $z(t) \sim O(1)$.
It follows immediately from \eqref{pnasymp-b} that $\tilde{p}_n=0$
implies $z(t) \sim \sf{m}{3b}+O(\sf{1}{|t|})$. Therefore, when $n=2m+1$, the
location of one of the peaks of the $n$-lump solution, to leading order,
is given by 
\begin{equation}
r(t) = r_0 + \sf{2m-3}{6b} +O(\sf{1}{|t|}), \quad s(t)=s_0, 
\qquad  m \geq 0 \,,
\label{centralpeak}
\end{equation}
for both $t \ll 0$ and $t \gg 0$.
This is the central peak located near the origin of the $rs$-plane since the 
remaining peak locations which scale as $O(|t|^{1/2})$ are located
either along the $r$- or $s$-axis as will be shown next.
Notice that the $1$-lump peak location from \eqref{u1} and the 
$3$-lump central peak location given by 
\eqref{F3zero} in Section 3 correspond to $m=0$ and $m=1$
respectively, with $(r_0,s_0)=(0,0)$.    

The remaining peak locations are obtained from \eqref{pnasymp-a}. In order
to obtain the leading order contribution, we put $z = |t|^{1/2}\xi$ in the
first term inside the sum of the second expression in \eqref{pnasymp-a} to
obtain 
\[\tilde{p}_n(r(t),s(t),t) \sim |t|^{n/2}h_n(\xi, \eta) + 
O(|t|^{(n-1)/2})\,, \]
where $h_n$ is a two-dimensional elementary
Schur polynomial also referred to as the heat polynomial~\cite{RW59} 
in $\xi$ and $\eta = \pm 3b$, the
$\pm$ sign depends on whether $t \gg 0$ or $t \ll 0$, respectively. 
Similar to the generalized Schur polynomials introduced in Section 2.1, the
heat polynomials have the generating function
\begin{equation}
\exp(\lambda \xi+\lambda^2\eta) = \sum_{n=0}^\infty h_n(\xi, \eta)\lambda^n\,,
\qquad h_n(\xi,\eta)= \sum_{p,q \geq 0}\frac{\xi^p\eta^q}{p!q!}, \quad p+2q=n\,,  
\qquad \eta = \pm 3b
\label{hn}
\end{equation}
and have the following useful properties
\begin{equation}
(a)\, \partial_\xi h_n = h_{n-1}, \qquad 
(b)\, (n+1)h_{n+1} = \xi h_n + 2\eta h_{n-1}, \qquad
(c)\, h_n(-\xi,\eta)=(-1)^nh_n(\xi,\eta) \quad \text{(parity)}\,,
\label{hnprop}
\end{equation}
for $n \geq 1$. The first few heat polynomials are given by 
$h_0=1, \, h_1 = \xi, \, h_2 = \sf{\xi^2}{2}+\eta, \cdots$.
Thus, by setting $\tilde{p}_n=0$ in \eqref{pnasymp-a}, the leading order
peak locations are given by $z_j(t)=r_j(t)+is_j(t)=|t|^{1/2}\xi_j$ where 
$\xi_j, \, j=1,2,\ldots,n$ are the complex roots of the heat poynomial 
$h_n$ regarded as a polynomial in complex variable $\xi$ only, with a fixed
real parameter $\eta$.
It remains to show that the roots are distinct. This is accomplished by 
Lemma~\ref{lemma1} below whose proof is elementary. 
\begin{lemma} \label{lemma1}
The heat polynomials $h_n(\xi,y)$ has a root at $\xi=0$ when
$n$ is odd. Moreover, $h_n$ has $n$ distinct real (in $\xi$),
roots if $\eta<0$, and $n$ distinct, pure imaginary (in $\xi$) 
roots if $\eta>0$.
\end{lemma}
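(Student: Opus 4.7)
My plan is to prove Lemma~\ref{lemma1} by reducing the heat polynomials to the classical Hermite polynomials, for which the relevant root behaviour is a standard fact.

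The claim that $h_n(0,\eta)=0$ when $n$ is odd is immediate from the parity property \eqref{hnprop}(c): setting $\xi=0$ gives $h_n(0,\eta)=(-1)^n h_n(0,\eta)$, which forces $h_n(0,\eta)=0$ whenever $n$ is odd. For the real-root statement when $\eta<0$, I would write $\eta=-\beta^2$ with $\beta>0$ and substitute $\lambda=t/\beta$ into the generating function \eqref{hn}. The exponent then becomes $t\xi/\beta - t^2 = 2t\,(\xi/(2\beta))-t^2$, which is precisely the generating exponent of the physicists' Hermite polynomials $H_n$, and matching powers of $t$ gives
\begin{equation*}
h_n(\xi,-\beta^2) \;=\; \frac{\beta^n}{n!}\,H_n\!\left(\frac{\xi}{2\beta}\right).
\end{equation*}
Because $H_n$ is a classical orthogonal polynomial on $\mathbb{R}$ with respect to the weight $e^{-x^2}$, it has exactly $n$ distinct real zeros, and the rescaling above transfers this property intact to $h_n(\cdot,\eta)$.

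For $\eta>0$, rather than redoing the analysis, I would use a symmetry of the generating function under the substitution $(\xi,\eta)\mapsto(i\xi,-\eta)$. Inserting $\xi\mapsto i\xi$ in the explicit formula in \eqref{hn} and collecting powers of $i$ from the exponent $p=n-2q$ yields the identity
\begin{equation*}
h_n(i\xi,\eta) \;=\; i^{n}\,h_n(\xi,-\eta),
\end{equation*}
equivalently $h_n(\xi,\eta)= i^{n}h_n(-i\xi,-\eta)$. Since $-\eta<0$ when $\eta>0$, the first part of the proof shows $h_n(\cdot,-\eta)$ has $n$ distinct real roots, and the identity above converts each such real root into a distinct purely imaginary root of $h_n(\cdot,\eta)$.

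There is no real obstacle here; the main bookkeeping step is the generating-function rescaling and the $i$-power tracking in the symmetry $h_n(i\xi,\eta)=i^n h_n(\xi,-\eta)$. If one preferred a self-contained argument avoiding Hermite polynomials altogether, the three-term recurrence \eqref{hnprop}(b) together with $\partial_\xi h_n=h_{n-1}$ supplies a Sturm-type induction: assuming $h_{n-1}$ and $h_n$ have interlacing real roots for $\eta<0$, evaluating $(n+1)h_{n+1}=\xi h_n+2\eta h_{n-1}$ at the zeros of $h_n$ and inspecting leading-coefficient signs at $\xi\to\pm\infty$ produces $n+1$ simple real zeros for $h_{n+1}$ with the interlacing preserved. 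I would, however, present the Hermite identification as the primary route, since it is shorter and connects the discussion to a widely known family.
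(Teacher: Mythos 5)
Your proposal is correct, and each of its three pieces lands: the parity argument for the root at $\xi=0$, the identification $h_n(\xi,-\beta^2)=\frac{\beta^n}{n!}H_n\bigl(\frac{\xi}{2\beta}\bigr)$ (which checks out against the generating functions, and in fact coincides with the Hermite relation the paper itself records in a remark immediately after the lemma), and the rotation identity $h_n(i\xi,\eta)=i^nh_n(\xi,-\eta)$ for the $\eta>0$ case. The first and third steps are exactly what the paper does. Where you genuinely diverge is the core claim for $\eta<0$: you import the classical theorem that orthogonal polynomials for the weight $e^{-x^2}$ have $n$ simple real zeros, whereas the paper gives a self-contained induction using \eqref{hnprop}: at each zero of $h_n$ the recurrence collapses to $(n+1)h_{n+1}=2\eta h_n'$, so $h_{n+1}$ alternates in sign at consecutive zeros of $h_n$, an end-behavior argument supplies one more sign change to the right of the last zero, and realness of the coefficients forces the final root to be real. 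That induction is precisely the ``Sturm-type'' alternative you sketch in your closing paragraph, so you have effectively anticipated the paper's route as a backup. The trade-off is the usual one: your primary route is shorter and ties the lemma to a standard family (useful since the paper later uses the Stieltjes relations for Hermite zeros), while the paper's induction keeps the lemma elementary and independent of orthogonal-polynomial theory. Either is acceptable; no gap.
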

\begin{proof}
The first part follows from property (c) in \eqref{hnprop} when $n$ is odd.

If $\eta < 0$, one proceeds by induction. The lemma clearly holds for the
base case $h_1(\xi, \eta)=\xi$. Assume it holds for $n \geq 1$, then
it follows from properties (a) and (b) of \eqref{hnprop} that 
$(n+1)h_{n+1} = 2\eta h_n'$ at each root of $h_n$, where $h_n'$ is the derivative
with respect to $\xi$. Since $h_n'$ changes sign at consecutive roots of $h_n$, so does
$h_{n+1}$. Therefore, $h_{n+1}$ has at least $n-1$ distinct roots, each between
two successive roots of $h_n$. Since for any $n$, $h_n \to \infty$ as $\xi \to \infty$ 
along the real $\xi$-axis, $h_n' > 0$ at the $n^{\mathrm th}$ root of $h_n$ where 
consequently, $h_{n+1} < 0$. Hence, $h_{n+1}$ has another root to the right
of the $n^{\mathrm th}$ root of $h_n$. Thus, $h_{n+1}$ has at least $n$
distinct roots. But since $h_{n+1}$ is a polynomial in $\xi$ with real coefficients
depending on the real parameter $\eta$, its roots are either real or 
complex conjugate pairs. Therefore, $h_{n+1}$ must have $n+1$ real, distinct roots.        

For $\eta > 0$, first put $\xi = i\hat{\xi}$ and note that $i^p = i^{n-2q}=i^n(-1)^q$
if $p+2q=n$. Then from \eqref{hn} it follows that $h_n(\xi,\eta) = 
i^nh_n(\hat{\xi},-\eta)$. Thus $h_n(\hat{\xi},-\eta)$ has
$n$ distinct, real roots when $\eta>0$. Therefore, $h_n$
has $n$ distinct, purely imaginary roots in $\xi$ including the root $\xi=0$
when $n$ is odd.
\end{proof}
In summary, Lemma~\ref{lemma1} and discussions preceding it demonstrate
that to leading order in $|t|$, the peaks of a $n$-lump solution are
separated into $n$ distinct peaks along the $r$-axis located at
\begin{subequations}
\begin{equation}
z_j(t) \sim |t|^{1/2}\big[(\xi_j, \, 0) + O(|t|^{-1/2})\big], 
\qquad  t \ll 0 \,, \\
\label{peak-r}
\end{equation}
and along the $s$-axis at 
\begin{equation}
z_j(t) \sim |t|^{1/2}\big[(0, \, \xi_j)+ O(t^{-1/2})\big], \qquad t \gg 0\,,
\label{peak-s}
\end{equation}
for each $j=1,2,\ldots,n$. Moreover, the $\xi_j$'s correspond to
the real roots of 
the heat polynomial $h_n(\xi, \eta)$ defined in \eqref{hn} with $\eta=-3b$.
\end{subequations}
When $n=2m+1$ is odd, the central peak is located on the $r$-axis,
given by \eqref{centralpeak}, and corresponds to the
root $\xi=0$ of $h_n(\xi,\eta)$ (see remark below).  
\paragraph{Remarks}
\begin{itemize}
\item[(a)] The heat polynomials $h_n(\xi,\eta)$ satisfies the heat equation
$h_{\xi\xi} = h_\eta$ in two-dimensions. They are closely related to the Hermite 
polynomials via $h_n(\xi,\eta) = \sf{(-\eta)^{n/2}}{n!}H_n(z), 
\, z=\xi(-4\eta)^{-1/2}$~\cite{RW59}. It is well known that the zeros of the 
Hermite polymials are real for real $z$. 
\item[(b)] Although we do not include here, it is possible to obtain the 
full asymptotic expansion for the roots $z(t)$ of $\tilde{p}_n =0$ by retaining the 
lower order terms in \eqref{pnasymp-a}. By setting $z(t)= |t|^{1/2}\xi$ 
in \eqref{pnasymp-a} and dividing by the leading power $|t|^{n/2}$, one finds 
that $\tilde{p}_n=0$ implies that $h_n(\xi, \eta) + 
O(\epsilon) = 0$ where $\epsilon = |t|^{-1/2}$. Substituting the expansion 
$\xi = \xi_0 + \epsilon \xi_1 + \ldots$ into this equation one can then calculate 
the $\xi_j$'s. The leading term $\xi_0$ satisfies $h_n(\xi_0, \eta)=0$,
and one can show that the $O(\epsilon)$ coefficient $\xi_1$ is real so that
$z(t)=\epsilon^{-1} \xi_0 + \xi_1 + O(\epsilon)$.
This then explains that for odd $n$, if $\xi_0=0$ then the central peak
location is real for $|t| \gg 0$ and is given by $z(t) = \xi_1+ O(\epsilon)$
as shown in \eqref{centralpeak}.
\end{itemize}
\subsection{Asymptotic peak heights}
The approximate peak heights are obtained by calculating
$u_j = u(z_j(t))$ where $z_j(t)=(r_j(t),s_j(t))$ are
the approximate peak locations given by $F_n^{(0)}=0$
and whose asymptotic expressions were given in Section 4.1.
We decompose the polynomial $F_n$ in \eqref{square} as $F_n=F_n^{(0)}+l$
where $l$ denotes all other square terms in the outer sum starting from $j=1$.
Then the $n$-lump solution at $z_j(t)$ can be expressed as
\[u_n = 2\partial_{xx}\ln(F_n^{(0)}+l) = 
2\frac{F_{nxx}^{(0)}+l_{xx}}{l}-
2\left(\frac{F_{nx}^{(0)}+l_x}{l}\right)^2 \,. \]
Writing $F_n^{(0)}=|A+iB|^2=A^2+B^2$, one obtains
$F_{nx}^{(0)} = 2(AA_x+BB_x) =0$ and $F_{nxx}^{(0)}=2(A_x^2+B_x^2)$ 
after using the fact that $A=B=0$ at $z=z_j(t)$. Furthermore, 
$A_x^2+B_x^2 = |A_x+iB_x|^2 = 
|\tilde{p}_{n-1}|^2$ which follows from
$\tilde{p}_{nx}=i\partial_{\theta_1}\tilde{p}_n$ and \eqref{propa}.
Therefore,
\begin{equation*}
u_n(z_j(t)) = 4\frac{|\tilde{p}_{n-1}|^2}{l}+
2\left(\frac{l_{xx}}{l}-\frac{l_x^2}{l^2}\right)\,.
\end{equation*}
We now proceed to estimate $\tilde{p}_{n-1}$ and $l$ in the above expression
for $u_n$ at the peak locations. Recall from \eqref{pnasymp-a} and \eqref{pnasymp-b}
that the dominant asymptotic behavior for $|t| \gg 0$ of the generalized 
Schur polynomials at the approximate peak locations are given by
\begin{subequations}
\begin{equation}
p_n(z_j(t)) \sim O(|t|^{n/2}), \qquad z_j(t) \sim O(|t|^{1/2}), \\
\label{pndominant-a}
\end{equation}
for any positive $n$, even or odd, and
\begin{equation}
p_n(z_j(t)) \sim p_{n-1}(z_j(t)) \sim O(|t|^{(n-1)/2}), 
\qquad z_j(t) \sim O(1)\,,
\label{pndominant-b}
\end{equation}
when $n$ is odd. The second estimate follows from the fact 
that both $p_n$ and $p_{n-1}$ has the same highest power of $|t|$ when
$n$ is odd. This is due to the weighted homogeneity property of the
$p_n$'s and can be deduced easily from \eqref{pn}.
\end{subequations}

The $j^{\mathrm th}$ term in the outer sum of \eqref{square} starts with $p_{n-j}$
in the sum inside $|\cdot|^2$. Thus it is evident from \eqref{pndominant-a}
and \eqref{pndominant-b} that the $j=1$ term will dominate in $l$ 
when $|t| \gg 0$. Using \eqref{propc} again, the sum inside $|\cdot|^2$ of 
the $j=1$ term can be expressed as a single generalized Schur polynomial with 
shifted arguments like in \eqref{pnt}  
\begin{equation}
\sf{i}{2b}\sum_{m=1}^n m (\sf{i}{2b})^{m-1}p_{n-m} =
\sf{i}{2b}p_{n-1}(\theta_1+c_1, \ldots,\theta_{n-1}+c_{n-1}) 
:=\sf{i}{2b}\hat{p}_{n-1} \,,  
\label{pnh}
\end{equation}
where the shifts $c_j$'s are determined
successively by $p_j(c_1,\ldots,c_j) =(j+1)(\sf{i}{2b})^j$ for
$j=1,\ldots,n-1$. From \eqref{pndominant-a} and \eqref{pndominant-b}
one finds that $\hat{p}_{n-1}(z_j(t)) \sim O(|t|^{(n-1)/2})$, and that
leads to
\begin{equation}
l(z_j(t)) \sim \sf{1}{4b^2}|\hat{p}_{n-1}|^2 + O(|t|^{n-2})\,,
\qquad |t| \gg 0 \,.
\label{lasymp}
\end{equation}
Furthermore, the derivatives $l_x, l_{xx}$ 
contain polynomials $p_{n-j}, \, j \geq 2$ and their
complex conjugates. Then it follows by using 
\eqref{pndominant-a} and \eqref{pndominant-b} again that  
both $(l_x/l)^2$ and $l_{xx}/l$ terms are $O(\sf{1}{|t|})$ at $z_j(t)$
when $|t| \gg 0$. Putting all these estimates together 
into the above expression $u_n(z_j(t))$ for the peak height, yields
\begin{equation}
u_n(z_j(t)) \sim 16b^2\frac{|\tilde{p}_{n-1}|^2}{|\hat{p}_{n-1}|^2}
+O(\sf{1}{|t|})  \,.
\label{unasymp}
\end{equation}
The generalized Schur polynomials $\tilde{p}_{n-1}$ and $\hat{p}_{n-1}$ in 
\eqref{pnt} and \eqref{pnh} are of the same
degree but only with shifted arguments. Hence, one can be expressed in 
terms of the other by using \eqref{propc} so that
$\tilde{p}_{n-1} = \hat{p}_{n-1} + O(|t|^{(n-2)/2})$ from
\eqref{pndominant-a} or \eqref{pndominant-b}.  
Consequently, $\tilde{p}_{n-1}/\hat{p}_{n-1} \sim 1 + O(\sf{1}{|t|^{1/2}})$ at
each peak location $z_j(t)$. Then from \eqref{unasymp}, it follows that
asymptotically as $|t| \to \infty$, each peak height $u_n(z_j(t)) \sim 16b^2$
for $j=1,\ldots,n$, as claimed at the beginning of this section.

An interesting result follows from \eqref{unasymp} regarding the ordering
of the peak heights for $t \ll 0$, that was observed for the $3$-lump
solution in Section 3.3 (see left panel of Figure~\ref{3lumppeakheights}).
It was shown in Section 4.1 that 
$\tilde{p}_n \sim |t|^{n/2}h_n(\xi, \eta)$ for a positive integer $n$.
Then, $\tilde{p}_{n-1}(z(t)) \sim |t|^{(n-1)/2}h_{n-1}(\xi, \eta)$ and 
$\hat{p}_{n-1}(z(t)) \sim |t|^{(n-1)/2}h_{n-1}(\xi+\epsilon_1, \eta)$ where
the shift $\epsilon_1$ is obtained as follows. From \eqref{pnt} and \eqref{pnh}
one can see that $\theta_1$ is shifted by different amounts namely,
$h_1=\sf{i}{2b}$ and $c_1=\sf{i}{b}$ in $\tilde{p}_n$ and $\hat{p}_n$, respectively.
This induces an additional shift $z(t) \to z(t)+\sf{1}{2b}$ in the 
argument of $\hat{p}_{n-1}$. Hence, after the scaling $z(t) = |t|^{1/2}\xi$, 
one obtains $\xi \to \xi+\epsilon_1$ where 
$\epsilon_1=\sf{\epsilon}{2b}, \,\, \epsilon = |t|^{-1/2}$.
Then for $t \ll 0$, \eqref{unasymp} can be re-expressed as
\[ u_n(z_j(t)) \sim 16b^2
\frac{h_{n-1}^2(\xi_j, -3b)}{h_{n-1}^2(\xi_j+\epsilon_1, -3b)} \,,\]
where $\xi_j, \, j=1,\ldots,n$ are $n$ real roots of the polynomial
$h_n(\xi,-3b)$. Expanding $h_{n-1}(\xi_j+\epsilon_1)$ and using relations 
(a), (b) from \eqref{hnprop}, and the fact that $h_n(\xi_j)=0$, yields 
\[h_{n-1}(\xi_j+\sf{\epsilon}{2b}) = h_{n-1}(\xi_j) + 
\sf{\epsilon}{2b} h_{n-2}(\xi_j) + O(\epsilon^2) = 
h_{n-1}(\xi_j)\left(1+\frac{\epsilon \xi_j}{12b^2}\right) 
+ O(\epsilon^2)\,.  \]
Substituting this into the above expression for $u_n(z_j(t))$ gives
\[ u_n(z_j(t)) \sim \frac{16b^2}{\big(1+\sf{\xi_j\epsilon}{12b^2}\big)^2}
+O(\epsilon^2) \,, \]
which demonstrates that the $n$-lump peak heights are in descending order from
left to right for $t \ll 0$. Note that for the central peak, when
$\xi_j=0$, it is necessary to compute the  $O(\epsilon^2)$ contribution
to $u_n(z_j(t))$.  
\subsection{Asymptotic form of solution and a dynamical system}
This subsection is devoted to certain topics arising from
the asymptotic analysis developed in Sections 4.1 and 4.2. 
First we deduce the long time asymptotics of the general $n$-lump
solution of KPI; then we derive a dynamical system for the zeros 
$\xi_j$ of the heat polynomials $h_n(\xi, \eta)$.

\subsubsection{Asymptotic form of $n$-lump solution}
We begin by deriving the local form of the polynomial $F_n$ near 
each peak location $z_j(t) \, j=1,\ldots,n$. As before, we denote 
$z:=(r,s), \, z_j(t) = (r_j(t), s_j(t))$, and put
$z = z_j(t)+(h, k)$ in $F_n = F_n^{(0)}+l = |\tilde{p}_n|^2+l$.
Then we expand in $h, k$ and retain the leading order terms in $|t|$
in $F_n$. Using \eqref{rs}, \eqref{propa} and the fact that 
$\tilde{p}_n(z_j(t))=0$, one obtains
\[\tilde{p}_n(z) = (ih-k)\tilde{p}_{n-1}(z_j(t))+
\sf{i}{2b}k\tilde{p}_{n-2}(z_j(t)) +Q(h,k)\,, \]
where $Q(h,k)$ consists of quadratic and higher powers in $h,k$
whose coefficients are expressed in terms of $\tilde{p}_{n-l}, \, l \geq 2$.
Employing \eqref{pndominant-a} and \eqref{pndominant-b} to estimate
the dominant behavior of $p_l(z_j(t))$ in $F_n^{(0)}$ then yields,
\[|\tilde{p}_n|^2 (z) \sim 
(h^2+k^2)|\tilde{p}_{n-1}|^2(z_j(t))+O(|t|^{n-2})\,.\]
It was shown in Section 4.2 that the dominant term in $l$ arises
from the $j=1$ term of \eqref{square} and its leading order contribution
for $|t| \gg 0$ is
\[ l(z) \sim \sf{1}{4b^2}|\hat{p}_{n-1}|^2(z_j(t)) + O(|t|^{n-2})
\sim \sf{1}{4b^2}|\tilde{p}_{n-1}|^2(z_j(t)) + O(|t|^{n-2}) \,.\]
Thus the asymptotic form of $F_n$ is given by
\begin{equation}
F_n (h,k; z_j(t)) \sim |\tilde{p}_{n-1}|^2(z_j(t)) \left(h^2+k^2+\sf{1}{4b^2}
+ O(\sf{1}{|t|}) \right) \,. 
\label{Fnasymp}
\end{equation}
Finally, substituting \eqref{Fnasymp} for each peak
location $(z_j(t)), \, j=1,\ldots, n$ in \eqref{Fn} gives
\[u_n(h,k;z_j(t)) \sim u_1(h,k) + O(\sf{1}{|t|}) \,.\]
Thus we have shown that asymptotically 
as $|t| \to \infty$, $u_n$ is a superposition of $n$ distinct
$1$-lump solutions whose peaks are located at $z_j(t), \, j=1,\ldots, n$. 
Furthermore, since $\int\!\!\int_{\mathbb{R}^2}u_n^2$ is time-invariant
it follows immediately that $\int\!\!\int_{\mathbb{R}^2}u_n^2 = 
n\int\!\!\int_{\mathbb{R}^2}u_1^2 + O(\sf{1}{|t|}) 
= n\int\!\!\int_{\mathbb{R}^2}u_1^2 = n(16\pi b)$.

\subsubsection{Dynamical system}
The heat polynomial $h_n(\xi,\eta)$ introduced in Section 4.1 
is in fact a rescaled version of the polynomial
\begin{subequations}
\begin{equation}
h_n(z,\tau) = \sum_{m_1,m_2 \geq 0}\frac{z^{m_1}\tau^{m_2}}{m_1!m_2!}, 
\qquad m_1+2m_2=n, \qquad \tau = 3bt \,, 
\label{heat-a}
\end{equation}
which is the dominant term in the second expression of the
asymptotic expansion in \eqref{pnasymp-a}. It is easy to see that
\begin{equation}
h_n(z,\tau) = |t|^{n/2}h_n(\xi, \eta)\,, \qquad 
z = |t|^{1/2}\xi, \qquad t=\tau/3b \,,
\label{heat-b}
\end{equation}
where $\eta=\pm 3b$ if $t>0$ or $t<0$, respectively. 
For each positive integer $n$,
$h_n(z,\tau)$ is a polynomial solution of the heat equation 
$u_\tau=u_{zz}$ with initial condition
$u(z,0)=h_n(0) = \sf{z^n}{n!}$ whose solution can be expressed as
\[ u(z,\tau) = \sf{1}{n!}\exp(\tau \partial_z^2)(z^n) = 
\sf{1}{n!}\sum_{j=0}^\infty\frac{\tau^j}{j!}\partial_{z}^{2j}(z^n) = h_n(z,\tau)\,. \]
\end{subequations}
The goal here is to investigate the dynamics of the zeros $z_k(\tau)$ of 
$h_n(z,\tau)=0$ with respect to the evolution variable $\tau$. This is
accomplished by differentiating
implicitly the equation $h_n(z_k(\tau),\tau)=0$, which yields
\[h_{nz}z_k'(\tau) + h_{n\tau} = h_{nz}z_k'(\tau) + h_{nzz} =0 \,, \]
after using the heat equation.
Now from Lemma~\ref{lemma1} and \eqref{heat-b} it is clear that
$h_n(z,\tau)=0$ has $n$ distinct roots. Hence, it can be expressed
as $h_n(z,\tau) = \sf{1}{n!}(z-z_1(\tau))(z-z_2(\tau))\cdots(z-z_n(\tau))$ 
so that the logarithmic derivative 
\[h_{nz}/h_n = \sum_{j=1}^n\frac{1}{z-z_j(\tau)}\,.\]
Moving the $k^{\mathrm th}$ term from the above sum to the left
and letting $z \to z_k(\tau)$, one recovers after using L'hopital's rule,
a system of ordinary differential equations for the zeros 
$z_k(\tau), \, k=1,\dots,n$, namely  
\begin{equation}
z_k'(\tau) = - \lim_{z \to z_k}\frac{h_{nzz}}{h_{nz}} = 
\sum_{j\neq k}\frac{2}{z_j-z_k} \,, \qquad z_k(0)=0 \,.
\label{ds}
\end{equation}
The dynamical system \eqref{ds} is one of the simplest examples describing
the motion of zeros of the solution to a partial differential equation. 
This system was studied
in different contexts such as moving poles of the complex Burger's 
equation~\cite{CC77,C78}, vortex dynamics~\cite{A83}, and more recently in proving 
certain conjecture related to the Riemann hypothesis~\cite{CSV94,RT20}. 
Equation \eqref{ds} can be expressed as a gradient
flow of a logarithmic potential
\[ z_k'(\tau) = -\frac{\partial V}{\partial z_k}\,, \qquad 
V=\sum_{\myatop{i,j}{i\neq j}}\log|z_i-z_j| \,,\]   
whose evolution can be computed as follows
\[
V'(\tau) = \sum_{k=1}^nz_k'(\tau)\frac{\partial V}{\partial z_k} =
-\sum_{k=1}^n\left(\frac{\partial V}{\partial z_k}\right)^2 = 
-\sum_{k=1}^n \sum_{\myatop{i,j}{i \neq k, j \neq k}}
\frac{4}{(z_i-z_k)(z_j-z_k)} 
=-\sum_{\myatop{i,j}{i \neq j}}\frac{4}{(z_i-z_j)^2} \,,  
\]
where the cross-terms vanish in the last inequality. Consequently,
the acceleration $z_k''(\tau)$ satisfies
\[z_k''(\tau) = -\frac{\partial V'(\tau)}{\partial z_k} = 
 \sum_{i \neq k}\frac{16}{(z_i-z_k)^3} \,,\]
which is the well-known Calogero-Moser system~\cite{C71}.
Thus, \eqref{ds} implies that the $z_k$'s satisfy the Calogero-Moser 
system although the converse is not necessarily true.

A complete description of the zeros of the polynomial $h_n(z,\tau)$ 
is of course readily available from the transformation given in \eqref{heat-b}
in terms of the polynomials $h_n(\xi, \eta)$ and particularly,
Lemma~\ref{lemma1} of Section 4.1.
For $\tau<0$, the zeros $z_k(\tau) = r_k(\tau)+is_k(\tau)=
(\sf{|\tau|}{3b})^{1/2}\xi_k$ are real, distinct, 
and symmetrically located about the origin along the $r$-axis, they all coalesce at 
$\tau=0$ and emerge along the $s$-axis as purely imaginary and complex conjugate 
roots for $\tau>0$. Nonetheless, one gains further insights into their dynamics 
from analyzing the system \eqref{ds}. Fix a $\tau_0<0$ and consider the interval
$I=[z_k(\tau_0), z_{k+1}(\tau_0)]$. Clearly the first derivative $h_{nz}$ has
opposite signs at the zeros $z_k(\tau_0)$, and $z_{k+1}(\tau_0)$. Now suppose that 
the derivative $h_{nz}$ is positive at $z_k(\tau_0)$ and negative at $z_{k+1}(\tau_0)$
so that the concavity $h_{nzz} < 0$ on $I$, then from \eqref{ds} it follows
that $z_k'(\tau_0) > 0$ whereas $z_{k+1}'(\tau_0) < 0$, which means that the
zeros behave like point particles attracting each other for $\tau <0$. 
The conclusion remains
the same if the sign of $h_{nz}$ is reversed at the end points of the interval $I$.
In this situation, the logarithmic potential $V(\tau)$ is a 
monotonically decreasing function of $\tau$
since $V'(\tau)<0$, and reaches a singularity 
($V \to -\infty$) when
all the zeros collide at $\tau=0$. An opposite scenario is observed for $\tau>0$
by replacing $z_j(\tau) = i\zeta_j(\tau)$ in \eqref{ds}, and $\zeta_j(\tau)$ is now real. 
The zeros now repel from each other and $V(\tau)$ increases monotonically
in $\tau$ as the zeros move further apart from each other.     
From the viewpoint of point particle dynamics the force of attraction (repulsion)
is given by the Calogero Moser system above depending on if the $z_k$'s are 
real (pure imaginary) for $\tau<0$ ($\tau>0$).

In conclusion, we have demonstrated that the $n$-lump solution of the KPI
equation splits into $n$ distinct peaks for $|t| \gg 0$ whose locations
evolve as $n$ interacting point particles as given by \eqref{ds}. However,
the dynamics only describe the leading order peak locations for large $|t|$
and can not be used to infer the evolution when $|t| \sim O(1)$ or
smaller.

\paragraph{Remarks}
\begin{itemize}
\item[(a)] Substituting $z_k(\tau) = (\sf{|\tau|}{3b})^{1/2}\xi_k$ 
in \eqref{ds} gives the following relations among the roots $\xi_k$
of $h_n(\xi,\eta)$
\[ x_k = \mp \sum_{j \neq k} \frac{1}{x_j - x_k}\,, 
\qquad \xi_k = 2\sqrt{3b}\,x_k\,, \quad 1 \leq k \leq n \,,\]
where the $\mp$ sign depends respectively, on whether
the the roots are real or pure imaginary.
These are called the Stieltjes relations~\cite{S85a,S85b} for the zeros 
of the Hermite polynomials. 
The Stieltjes relations can be used to determine the roots $x_k$
exactly.

\item[(b)] The dynamical system \eqref{ds} was also obtained 
in~\cite{M79,P94} by considering zeros of wronskian $\tau$-functions 
of KPI. This wronskian can in fact be identified with 
the generalized Schur polynomial $\tilde{p}_n$ in \eqref{pnt}.
In this paper, it was explicitly shown that the zeros of
$\tilde{p}_n$ are indeed {\it distinct} for $|t| \gg 0$,
and dynamical system \eqref{ds} was used to explain the anomalous 
nature of the $n$-lump interaction.
\end{itemize}

\section{$n$-lump eigenfunction}
The KPI equation admits a Lax pair which is the following linear
system of equations
\begin{equation}
i\psi_{y} = \psi_{xx}+u\psi, \quad \qquad 
\psi_t = -\psi_{xxx}-\sf{3}{2}u\psi_x-\sf{3}{4}(u_x+iw)\psi, \qquad w_x=u_y\,,  
\label{lax}
\end{equation} 
where the common solution $\psi(x,y,t)$ is usually called the 
eigenfunction and $u(x,y,t)$ is the KPI solution. It follows
from the integrability condition $\psi_{ty}=\psi_{yt}$ of \eqref{lax} 
that $u(x,y,t)$ satisfies the KPI equation \eqref{kp}.  
It is significant to note that the first equation in \eqref{lax} 
is the well-known non-stationary Schr\"odinger equation (NSE) with 
$u(x,y,t)$ regarded as the potential function parametrized by $t$. 
Thus, any solution of KPI equation at a fixed $t$ is a potential
function and in particular, the $n$-lump solution $u_n(x,y,t)$
belongs to the class of rational potentials for the NSE.
\subsection{Binary Darboux transformation} 
The aim of this section is to derive an explicit form for the
eigenfunction $\psi$ of \eqref{lax} associated with the $n$-lump
potential $u_n$ by employing binary Darboux transformation~\cite{MS91}. 
Such eigenfunctions were found via the IST method~\cite{AV97,VA99} 
and the $\bar{\partial}$-dressing method~\cite{Dub99} in earlier studies 
which outlined an explicit scheme to
construct $\psi$ and $u_n$, and which works well for small $n$ values.
However, the scheme becomes unwieldy to obtain general expressions
for $\psi$ for aribtrary $n$ -- a task that is accomplished by the
algebraic method presented here.

Let us first start with solutions of \eqref{lax} when $u=0$.
In this case, \eqref{lax} coincides with \eqref{phi}. Then it is
possible to verify the following lemma by direct calculation.
\begin{lemma}
\label{lemma2}
Suppose $\psi_0$ is any solution of \eqref{phi},
and $\phi_n(k_0)$ defined by \eqref{sp} is another particular solution 
of \eqref{phi} with a fixed complex parameter $k_0=a+ib$.   
Then 
\[ \hat{\psi} = -\frac{M(\psi_0,\bar{\phi}_n)}{\bar{\phi}_n}\,, \qquad 
M(\psi_0,\bar{\phi}_n)= 
\int_x^\infty\psi_0\bar{\phi}_n\,dx \]
is a solution of \eqref{lax} with $u=\hat{u}:=2\ln(\bar{\phi}_n)_{xx}$. 
\end{lemma}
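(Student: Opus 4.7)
The plan is to verify both equations of the Lax pair \eqref{lax} by direct substitution of $\hat\psi = -M/\bar\phi_n$ with $u = \hat u = 2\ln(\bar\phi_n)_{xx}$, using only the two equations \eqref{phi} satisfied by $\psi_0$ and by $\phi_n(k_0)$. Writing $\phi := \bar\phi_n$ for brevity, complex conjugation of \eqref{phi} gives the dual pair $-i\phi_y = \phi_{xx}$, $\phi_t + \phi_{xxx} = 0$, while $\psi_0$ still satisfies \eqref{phi} itself. I would assume enough decay at $x=+\infty$ (which holds automatically for the natural choice $\psi_0 = \phi_m(k')$ provided $\Im(k')+\Im(k_0)>0$) to make $M$ converge and to kill the boundary terms at infinity that arise from integration by parts.

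The first step is to work out how $\partial_x, \partial_y, \partial_t$ act on $M=\int_x^\infty\psi_0\phi\,dx'$. Directly, $M_x=-\psi_0\phi$. For $M_y$ and $M_t$, I would use the fact that the $y$- and $t$-derivatives of the integrand can be written as total $x$-derivatives --- the standard Wronskian identities for the Schr\"odinger operator and its formal adjoint. Explicitly, $(\psi_0\phi)_y = -i\partial_{x'}[(\psi_0)_{x'}\phi - \psi_0\phi_{x'}]$ and $(\psi_0\phi)_t = -\partial_{x'}[(\psi_0)_{x'x'}\phi - (\psi_0)_{x'}\phi_{x'} + \psi_0\phi_{x'x'}]$. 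Integrating from $x$ to $\infty$ then gives
\[ M_y = i\bigl[(\psi_0)_x\phi - \psi_0\phi_x\bigr], \qquad M_t = (\psi_0)_{xx}\phi - (\psi_0)_x\phi_x + \psi_0\phi_{xx}. \]

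Next, differentiating $\hat\psi=-M/\phi$ via the quotient rule and substituting these identities together with $\phi_y = i\phi_{xx}$ and $\phi_t = -\phi_{xxx}$ produces explicit expressions for $\hat\psi_x, \hat\psi_{xx}, \hat\psi_{xxx}, \hat\psi_y, \hat\psi_t$ in terms of $\psi_0$ and its $x$-derivatives, $M$, and $\phi,\phi_x,\phi_{xx},\phi_{xxx}$. With $\hat u = 2(\phi\phi_{xx}-\phi_x^2)/\phi^2$, the first Lax equation $i\hat\psi_y = \hat\psi_{xx}+\hat u\hat\psi$ collapses after a few lines: both sides reduce to $(\psi_0)_x - \psi_0\phi_x/\phi - M\phi_{xx}/\phi^2$, verifying the non-stationary Schr\"odinger equation.

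The only real obstacle is the second Lax equation, which is considerably heavier. In addition to $\hat u_x$ one needs $\hat w$ with $\hat w_x=\hat u_y$; using $\phi_y=i\phi_{xx}$ and the identity $\phi_{xx}/\phi = (\ln\phi)_{xx} + ((\ln\phi)_x)^2$, the resulting expression for $\hat u_y$ integrates in $x$ explicitly and yields $\hat w$ as $i\hat u_x$ plus a cubic polynomial in $(\ln\phi)_x$ and $(\ln\phi)_{xx}$ (modulo an additive constant absorbable into $\hat\psi$). Substituting into the $t$-equation and organizing the result by the denominator power $\phi^{-k}$, the $\psi_0$-linear pieces combine to give the $\psi_0\phi_{xx}/\phi$-type terms that match the contribution of $M_t$, while the $M$-linear pieces produced by $\hat u_x\hat\psi$ and $\hat w\hat\psi$ cancel precisely the higher-order $\phi$-derivative terms coming from $\hat\psi_{xxx}$, with $\hat u\hat\psi_x$ absorbing the remaining mixed contributions. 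No genuinely new idea beyond those used for the Schr\"odinger check is required; the difficulty is purely in the bookkeeping of several dozen rational terms in $\phi$, which I would carry out by grouping according to the highest index of the $\phi$-derivative appearing.
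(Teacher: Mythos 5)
Your proposal is correct and takes the same route as the paper, which simply asserts that the lemma "can be verified by direct calculation": you supply exactly that calculation, and your key identities $M_x=-\psi_0\bar\phi_n$, $M_y=i[(\psi_0)_x\bar\phi_n-\psi_0\bar\phi_{n,x}]$, $M_t=(\psi_0)_{xx}\bar\phi_n-(\psi_0)_x\bar\phi_{n,x}+\psi_0\bar\phi_{n,xx}$, together with $\hat w=2i\partial_x(\bar\phi_{n,xx}/\bar\phi_n)$ (so that $\hat u_x+i\hat w=-4(\ln\bar\phi_n)_x(\ln\bar\phi_n)_{xx}$), do make both Lax equations close up as you claim. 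The only nit is that the extra term in $\hat w$ beyond $i\hat u_x$ is the quadratic expression $4i(\ln\bar\phi_n)_x(\ln\bar\phi_n)_{xx}$ rather than a cubic one, but this does not affect the argument.
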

Recall that $\bar{\phi}_n$ is the complex conjugate of $\phi_n$.
The transformation $(\psi_0, u=0) \to (\hat{\psi}, \hat{u})$
is called an adjoint Darboux transformation.    
Next we introduce an ordinary Darboux transformation.
\begin{lemma}
Suppose $(\hat{\psi}, \hat{u})$ satisfies \eqref{lax} as in Lemma~\ref{lemma2}
and $\hat{\psi}_n = -M(\phi_n,\bar{\phi}_n)/\bar{\phi}_n$ is a particular
solution \eqref{lax} with $u=\hat{u}$. Then 
$\psi_n = \hat{\psi}_x - \hat{\psi}(\ln \hat{\psi}_n)_x$ is a solution of
\eqref{lax} with $u = \hat{u}+2\ln(\hat{\psi}_n)_{xx}=u_n$ where
$u_n$ is the $n$-lump solution.
\label{lemma3}
\end{lemma}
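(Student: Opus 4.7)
The plan is to treat Lemma~\ref{lemma3} as an ordinary (standard) Darboux transformation applied to the linear system obtained in Lemma~\ref{lemma2}. First I would identify the particular eigenfunction: applying Lemma~\ref{lemma2} with $\psi_0 = \phi_n$ gives $\hat{\psi}_n = -\tau_n/\bar{\phi}_n$, where $\tau_n = M(\phi_n,\bar{\phi}_n)$ is precisely the $n$-lump $\tau$-function in \eqref{taun}. Then the proposed new potential simplifies as
\[
\hat{u} + 2(\ln\hat{\psi}_n)_{xx} = 2(\ln\bar{\phi}_n)_{xx} + 2(\ln\tau_n)_{xx} - 2(\ln\bar{\phi}_n)_{xx} = 2(\ln\tau_n)_{xx} = u_n,
\]
by \eqref{Fn}, which verifies the potential part of the claim.

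Next I would rewrite the transformation in factorised form, $\psi_n = \hat{\psi}_n\,(\hat{\psi}/\hat{\psi}_n)_x$, so that setting $\rho := \hat{\psi}/\hat{\psi}_n$ one has $\psi_n = \hat{\psi}_n\rho_x$. Since both $\hat{\psi}$ and $\hat{\psi}_n$ satisfy \eqref{lax} with the same potential $\hat{u}$, the quotient $\rho$ satisfies a compatible pair of equations with no potential term, and the standard Darboux identity expresses $i\rho_y$ and the $x$-derivatives of $\rho$ in closed form in terms of $(\ln\hat{\psi}_n)_x$. To establish the Schrödinger equation $i\psi_{n,y} = \psi_{n,xx} + u_n\psi_n$ I would differentiate $\psi_n = \hat{\psi}_n\rho_x$ twice in $x$ and once in $y$, substitute $i\hat{\psi}_y = \hat{\psi}_{xx}+\hat{u}\hat{\psi}$ and its analogue for $\hat{\psi}_n$, and observe that the $\hat{u}$-contributions cancel, leaving exactly the term $[\hat{u}+2(\ln\hat{\psi}_n)_{xx}]\psi_n = u_n\psi_n$.

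For the time-evolution equation in \eqref{lax}, the strategy is parallel but technically heavier because of the nonlocal field $w$ with $w_x = u_y$. I would differentiate $\psi_n = \hat{\psi}_n\rho_x$ three times in $x$ and once in $t$, eliminate $\hat{\psi}_t$ and $\hat{\psi}_{n,t}$ using the second Lax equation with potential $\hat{u}$ and auxiliary field $\hat{w}$ (where $\hat{w}_x=\hat{u}_y$), and then collect terms proportional to $\psi_n$, $\psi_{n,x}$, and $\psi_{n,xxx}$. The requisite identification of the new auxiliary field is $w_n = \hat{w} + 2(\ln\hat{\psi}_n)_{xy}/i$ (up to a real constant), which is consistent with $\partial_x w_n = \partial_y u_n$ because $u_n-\hat{u}=2(\ln\hat{\psi}_n)_{xx}$.

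The main obstacle will be the bookkeeping in this second computation: the nonlocal term $w$ requires that the Darboux shift of the potential at the level of the evolution equation match the $\partial_y$-derivative of the gauge factor $(\ln\hat{\psi}_n)$, and this matching has to be shown by integration in $x$ using the fact that $\hat{\psi}_n$ itself satisfies both equations of \eqref{lax} with $u=\hat{u}$. Once this compatibility is established, all remaining cancellations follow from the Schrödinger identities already used in the first part, and the proof reduces to mechanical differentiation. Alternatively, one could short-circuit the second calculation by invoking the general binary/ordinary Darboux framework for the KP Lax pair as developed in~\cite{MS91}, in which both equations transform covariantly by construction.
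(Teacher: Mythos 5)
Your proposal follows essentially the same route as the paper: the paper's own justification consists precisely of (i) noting that $\hat{\psi}_n$ is obtained from Lemma~\ref{lemma2} with $\psi_0=\phi_n$, and (ii) the telescoping identity $\hat{u}+2(\ln\hat{\psi}_n)_{xx}=2(\ln\bar{\phi}_n)_{xx}+2(\ln\hat{\psi}_n)_{xx}=2\partial_x^2\ln M(\phi_n,\bar{\phi}_n)=u_n$, with the covariance of \eqref{lax} under the elementary Darboux map $\psi\mapsto\psi_x-\psi(\ln\hat{\psi}_n)_x$ left to the standard machinery of~\cite{MS91}, which you also invoke (and in fact you sketch the direct verification in more detail than the paper does). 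One small correction: your proposed auxiliary-field shift $w_n=\hat{w}+2(\ln\hat{\psi}_n)_{xy}/i$ contradicts the constraint you state immediately afterward, since $w_{n,x}=u_{n,y}$ together with $u_n-\hat{u}=2(\ln\hat{\psi}_n)_{xx}$ forces $w_n=\hat{w}+2(\ln\hat{\psi}_n)_{xy}$ (up to a function of $y,t$); the spurious $1/i$ would derail the bookkeeping in the $t$-equation, but once removed the rest of your outline goes through.
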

By replacing $\psi_0$ by $\phi_n$ in the expression for $\hat{\psi}$
of Lemma~\ref{lemma2}, it should be clear that $\hat{\psi}_n$ solves
\eqref{lax} with $u=\hat{u}$ as claimed in Lemma~\ref{lemma3}.
It is easy to verify from the definitions of $\hat{u}, \hat{\psi}_n$,
\eqref{taun} and \eqref{Fn} in Section 2.2 that 
$\hat{u}+2\ln(\hat{\psi}_n)_{xx}
=2\ln(\bar{\phi}_n)_{xx}+ 2\ln(\hat{\psi}_n)_{xx}=
2\partial_x^2\ln M(\phi_n,\bar{\phi}_n)=u_n$.    
Thus $(\hat{\psi}, \hat{u}) \to (\psi_n,u_n)$ is the ordinary Darboux
transformation. The composition of the adjoint and ordinary Darboux
transformations gives 
\[(\psi_0, u=0) \overset{\phi_n,\bar{\phi}_n}{\longrightarrow} 
(\psi_n,u_n)\]
which is called the binary Darboux transformation that is
triggered by the solutions $\phi_n, \bar{\phi}_n$ of 
\eqref{phi} or \eqref{lax} with $u=0$. 
To further clarify this transformation, note that 
from the definitions of $\hat{\psi}, \, \hat{\psi}_n$ in 
Lemmas 2 and 3, one obtains  
\[ \psi_n = \hat{\psi}\partial_x\ln(\hat{\psi}/\hat{\psi}_n)
= -\frac{M(\psi_0,\bar{\phi}_n)}{\bar{\phi}_n}
\partial_x\ln\left(\frac{M(\psi_0,\bar{\phi}_n)}{M(\phi_n,\bar{\phi}_n)}\right) \,.
\]
Then after using the fact that $M(f,g)_x=-fg$ in the above expression, yields the
following expression for the $n$-lump eigenfunction and the solution   
\begin{equation}
\psi_n = \psi_0 - \frac{M(\psi_0,\bar{\phi}_n)}{M(\phi_n,\bar{\phi}_n)}\,\phi_n\,,
\quad \qquad u_n = 2\partial_x^2\ln M(\phi_n,\bar{\phi}_n) \,.
\label{psin}
\end{equation}  
\paragraph{Remarks}
\begin{itemize}
\item[(a)] The binary Darboux transformation is a powerful tool
to algebraically generate solutions to many integrable 
systems~\cite{MS91}. In particular, this method has been used
to investigate a broader class of rational solutions of KPI~\cite{ACTV00}.   
\item[(b)] It is possible to generate KP I lump solutions by
only using the ordinary Darboux transformation prescribed in
Lemma~\ref{lemma3}. This will lead to a wronskian form of
the $\tau$-function. However, the resulting solutions are in general, 
singular in $\mathbb{R}^2$ for any $t$.
\end{itemize}
Our next goal is to highlight certain interesting features of the 
eigenfunction $\psi_n$ and their role to characterize the $n$-lump NSE potential.
\subsection{Properties of the eigenfunction}
The function $\psi_0$ that was left arbitrary in \eqref{psin} is
now chosen to be $\psi_0 = A(k)\exp{i \theta}$ where
$\theta = kx+k^2y+k^3t+\theta_0(k)$,
$k \in \mathbb{C}$ an arbitrary complex parameter, and $A(k)$
to be determined a little later.
Inserting this $\psi_0$ into \eqref{psin} and using   
$\phi_n(k_0)= p_n(k_0)\exp{i\theta(k_0)}, \, k_0=a+ib$ together with
its complex conjugate $\bar{\phi}_n(k_0)$, one first computes
\[M(\psi_0,\bar{\phi}_n) = 
A(k)\sum_{j=0}^n \frac{i\bar{p}_{n-j}(k_0)}{(k-\bar{k}_0)^{j+1}} 
e^{i(\theta(k)-\bar{\theta}(k_0))}\,, \qquad
M(\phi_n,\bar{\phi}_n) = 
\frac{F_n}{2b}e^{i(\theta(k_0)-\bar{\theta}(k_0))} \,. \]
by using integration by parts and \eqref{taun},\eqref{Fn}.  
Substituting the above expressions back into \eqref{psin}
results in the following expression for the eigenfunction
\begin{equation}
\psi_n = \mu(k)e^{i \theta(k)}, \quad \qquad  
\mu(x,y,t,k) = 1+\frac{(k_0-\bar{k}_0)}{(k-k_0)}\left[
\frac{(F_n-p_n\bar{p}_n)}{F_n}
-\frac{p_n}{F_n}
\sum_{j=1}^n\frac{\bar{p}_{n-j}}{(k-\bar{k}_0)^j} \right] \,, 
\label{mu}
\end{equation}
where $A(k) = (k-\bar{k}_0)/(k-k_0)$ is chosen to ensure the normalization
$\mu(k) \to 1$ as $k \to \infty$. It is clear from \eqref{mu} that
the reduced eigenfunction $\mu(k)$ is meromorphic in the complex $k$-plane
with a simple pole at $k=k_0$ and a pole of order $n$ at $k=\bar{k}_0$.
In fact, by partial fraction decomposition, one can explicity write
\begin{gather}
\mu = 1 + \frac{\mu_1}{k-k_0}+
\sum_{j=1}^n\frac{\mu_{\bar{j}}}{(k-\bar{k}_0)^j}\,, \nonumber \\
\mu_1=(k_0-\bar{k}_0)\frac{F_n-p_n\bar{p}_n}{F_n}
-\frac{p_n}{F_n}\sum_{l=1}^n\frac{\bar{p}_{n-l}}{(k_0-\bar{k}_0)^{l-1}}\,,
\qquad \mu_{\bar{j}} = 
\frac{p_n}{F_n}\sum_{l=j}^n\frac{\bar{p}_{n-l}}{(k_0-\bar{k}_0)^{l-j}}\,,
\quad j=1,\ldots,n \,.  
\label{mupole}
\end{gather}
It is evident from \eqref{mupole} that $\mu(x,y,t,k)$ is a 
rational function in $x,y,t$. Moreover, 
for fixed $t$, one can deduce from \eqref{mupole}, \eqref{square} and \eqref{pn}
that $\mu_1 \sim O(\sf{1}{\theta_1}) \sim O(\sf{1}{R}), \, 
\mu_{\bar{j}} \sim O(\sf{1}{\bar{\theta}_1^j}) \sim O(\sf{1}{R^j})$
so that $\mu \to 1$ as $R=\sqrt{x^2+y^2} \to \infty$.

It follows from \eqref{lax} and \eqref{mu} that $\mu(k)$ satisfies the reduced NSE
\begin{equation}
i\mu_y = \mu_{xx}+2ik\mu_x+u_n\mu \,, 
\label{mueqn}
\end{equation}
as well as a time evolution equation which is not necessary to include here. 
Letting $k \to \infty$, one obtains from \eqref{mupole} and \eqref{mueqn} 
the following relation for the $n$-lump solution in terms of the residues
at the poles of $\mu(k)$
\[ u_n = -2i(\mu_1+\mu_{\bar{1}})_x \,,\]
which provides the NSE potential in terms of the eigenfunction. That is
usually the case in the IST method where one solves first for $\mu(k)$
and then obtains $u_n$ from it. However, in our case it is possible to
verify the above relation from the explicit expressions for $\mu(k)$ 
and $u_n$ that are already available. Since 
$\mu_1+\mu_{\bar{1}} = \sf{2ib}{F_n}(F_n-p_n\bar{p}_n)$ from \eqref{mupole}
and $u_n=2(\ln F_n)_{xx}$, it suffices to verify that 
$F_{nx}=2b(F_n-p_n\bar{p}_n)$. The latter follows immediately from
the expression of $F_n$ in \eqref{Fn} and noting that 
$\partial_x^{2n+1}|p_n|^2=0$.  

Near the pole $k=k_0$, the reduced eigenfunction in \eqref{mu} can
be separated into a singular and regular part $\mu(k) = \nu(k)+\mu_1/(k-k_0)$ 
where $\nu(k)$ is analytic near $k=k_0$ and $\nu(k) \to 1$ as $k \to \infty$.
Inserting this decomposition in \eqref{mueqn}, the coefficient of
$(k-k_0)^{-1}$ yields
\[L(k_0)\mu_1 =0, \quad \qquad 
L(k)=i\partial_y-\partial_{xx}-2ik\partial_x-u_n \,,\]
which implies that $\mu_1(k_0)$ is the eigenfunction of the linear operator
$L(k)$ with eigenvalue $k=k_0$. Integrating the equation $L(k_0)\mu_1 =0$ 
over $\mathbb{R}^2$ for fixed $t$ and using Green's theorem, one obtains
\[\int\!\!\int_{\mathbb{R}^2}u_n\mu_1 = 
\oint_{\Gamma_\infty}-i\mu_1(dx+2k_0dy)-\mu_{1x}dy =
-\oint_{\Gamma_\infty}\mu_1\, d\theta_1 \,, \]
where $\Gamma_\infty$ is a positively oriented 
contour at the boundary of $\mathbb{R}^2$ (i.e., $R \to \infty$)
surrounding the origin once. The last equality above follows
from \eqref{theta} and the fact that $\mu_{1x} \sim O(\sf{1}{R^2})$.
From \eqref{square} and \eqref{mupole}, one obtains as $R \to \infty$  
that $\mu_1 = -p_{n-1}/p_n + O(\sf{1}{R^2})=
-\partial_{\theta_1}\log p_n + O(\sf{1}{R^2})$ and from \eqref{pn}, 
$p_n \sim \theta_1^n/n!$. Thus,
\begin{equation*}
\frac{1}{2\pi i}\int\!\!\int_{\mathbb{R}^2}u_n\mu_1 = 
\frac{1}{2\pi i} \oint_{\Gamma_\infty} n\frac{d\theta_1}{\theta_1} = n\,. 
\end{equation*}
The above expression gives the winding number of the generalized Schur 
polynomial $p_n$ in the complex $\theta_1$-plane, and
defines a topological {\it charge}~\cite{AV97}, which provides
another characterization of the $n$-lump NSE
potential $u_n$. In the IST theory the charge is a constraint on the
NSE potential required to solve for $\mu(k)$ in \eqref{mu} in a 
consistent manner.  
In a similar way, writing 
$\mu(k)=\tilde{\nu}(k)+\mu_{\bar{1}}/(k-\bar{k}_0)
+ \ldots + \mu_{\bar{n}}/(k-\bar{k}_0)^n$, where $\tilde{\nu}(k)$ is analytic
near $k=\bar{k}_0$, the following chain of equations
\[L(\bar{k}_0)\mu_{\bar{j}} = 2i\partial_x\mu_{\overline{j+1}}\,, \quad j=1,2,\ldots,n-1,
\qquad \quad L(\bar{k}_0)\mu_{\bar{n}} = 0 \]
are obtained from the coefficients of $(k-\bar{k}_0)^j$ in \eqref{mueqn}.
In this case the linear operator $L(k)$ has the eigenfunction 
$\mu_{\bar{n}}(\bar{k}_0)$ with eigenvalue $k=\bar{k}_0$.  
It is possible to obtain a similar topological relation as above
if one integrates
the quantity $u_n\mu_{\bar{1}}$ over $\mathbb{R}^2$. Integrating
the equation $L(\bar{k}_0)\mu_{\bar{1}} = 2i\mu_{\bar{2}x}$
arising from the coefficient of $(k-\bar{k}_0)^{-1}$
and applying Green's theorem yields,
\[\int\!\!\int_{\mathbb{R}^2}u_n\mu_{\bar{1}} = 
\oint_{\Gamma_\infty}\mu_{\bar{1}}d\bar{\theta}_1
=n \oint_{\Gamma_\infty}\frac{d\bar{\theta}_1}{\bar{\theta}_1} 
=-2n\pi i \,, \]
after using the fact that from \eqref{mupole} 
$\mu_{\bar{1}x} \sim O(\sf{1}{R^2}), \, \mu_{\bar{2}} \sim O(\sf{1}{R^2})$
as $R \to \infty$ (so they do not contribute to the line integral), and 
$\mu_{\bar{1}} = \bar{p}_{n-1}/\bar{p}_n + O(\sf{1}{R^2})
=n/\theta_{\bar{1}}+O(\sf{1}{R^2})$.     
Combining the two results above, one can give a unified definition
of the topological charge as~\cite{VA99}
\begin{equation} 
Q(k_{\alpha}) = \frac{\mathrm{sgn}(\Im(k_{\alpha}))}{2 \pi i}
\int\!\!\int_{\mathbb{R}^2}u_n\mathrm{Res}_{k_\alpha}\mu(k) = n \,,
\qquad k_\alpha \in \{k_0, \bar{k}_0\} \,. 
\label{Q}
\end{equation}
In fact $Q(k_0)=Q(\bar{k}_0)$ can also be shown by integrating
$u_n(\mu_1+\mu_{\bar{1}})$ over $\mathbb{R}^2$ and using the residue
relation $u_n=-2i(\mu_1+\mu_{\bar{1}})_x$.   
For the sake of completeness we mention that using \eqref{mupole}
and Green's theorem, it can also be shown that
\[\int\!\!\int_{\mathbb{R}^2}u_n\mu_{\bar{j}} =0, \quad 2 \leq j \leq n\,, \]
since $\mu_{\bar{j}} \sim O(\sf{1}{R^j})$ as $R \to \infty$.
Recall that $u_n \notin L^1(\mathbb{R}^2)$ since it decays as $\sf{1}{R^2}$.
The slow decay is the underlying fact why the charge appears.
\paragraph{Remarks}
\begin{itemize}
\item[(a)] Our discussion in Section 5.2 folows
closely the spectral theory of KPI rational solution developed
in \cite{AV97,VA99} although our approach is algebraic.  
The explicit expression for $\mu(k)$ in \eqref{mupole} was
not presented earlier since the resulting computation via IST
becomes rather unwieldy. 
\item[(b)] The IST approach in~\cite{AV97,VA99} assumes that the
kernel of the operator $L(k_0)$ is one-dimensional. That is
known {\it not} to be true for more general multi-lump solutions of KPI
although it has been conjectured~\cite{AV04} that 
$\dim(\ker(L(k_0))) = Q+2-(m+\bar{m})$ where $Q$ is the charge and 
$m,\bar{m}$ are the order of the
poles of $\mu(k)$ at $k_0$ and $\bar{k}_0$, respectively.
It should be interesting to derive the above relation for 
$\dim(\ker(L(k_0)))$ via the direct, algebraic
approach where $\mu(k)$ is explicitly constructed
by the binary Darboux transformation.
\end{itemize}

\section{Concluding remarks}
The rational multi-lump solutions of KPI equation had been
found earlier but their internal dynamics have not been
studied in great details because the underlying formulas are
complicated. For this reason, we have considered a special class of
solutions that are relatively easier to analyze as well as
retain the flavor of anomalous interactions exhibited by such types
of rational solutions. We have comprehensively studied the behavior
and properties of this class of solutions in this paper.
In the future, we plan to investigate the general type of multi-lump
solutions of KPI, characterize them and develop a suitable
classification scheme. We would also like to relate our
work to the IST scheme in the hope of addressing certain
open issues. 

\section{Acknowledgments}
SC thanks Prof. Mark Ablowitz (CU Boulder) for useful discussions. 
Early part of this work was partially supported
by NSF grant No. DMS-1410862.


\section*{Appendix}
\appendix
\section{Approximate vs exact peak locations}
Throughout Sections 3 and 4, the approximation
$F_n^{(0)}(r,s,t)=0$ equivalently,
$\tilde{p}_n(r,s,t)=0$ was used to obtain the locations of the lump peaks
instead of their exact values that are obtained from maximizing $u_n(r,s,t)$.
As mentioned in Section 3.2, this approximation is based the assumption
that the maxima of $u_n$ are close to the minima of $F_n$ which again,
are close to the zeros of the highest degree polynomial $F_n^{(0)}$
in the expression \eqref{square} for $F_n$. In fact, the exact and approximate
peak locations were shown in Section 3.2 to be off by $O(|t|^{-1/2})$ for $|t| \gg 0$
in the case of $2$-lump solutions. Here we are going to provide a
justification that this
approximation is also valid for the $n$-lump solution
by utilizing the asymptotics developed in Sections 4.2 and 4.3.
In the following we will denote the exact peak location by
$z_p:=(r_p,s_p)$, the approximate peak location via minimizing $F_n$
by $z_p^{\mathrm min} = (r_p^{\mathrm min}, s_p^{\mathrm min})$, and
the approximate peak location from $F_n^{(0)}=0$ by
$z_p^{(0)} = (r_p^{(0)}, s_p^{(0)})$. Then it will be shown specifically that
\begin{equation}
z_p \sim z_p^{\mathrm min} + O(|t|^{-1/2}), \qquad
z_p^{\mathrm min} \sim z_p^{(0)} + O(|t|^{-1/2}), \qquad |t| \gg 0 \,.
\label{zpeak}
\end{equation}
We will give the details of the second
asymptotic calculation below and then will briefly outline the first one.
 
Differentiating $F_n = F_n^{(0)}+l = A^2+B^2+l$ with respect to $r$ and $s$
one readily obtains expressions for $F_{nr} := g, \, F_{ns} := h$ and
their derivatives $g_r,\, h_s,\, g_s=h_r$. Setting $A=B=0$
into these expressions, yields the following at $z=z_p^{(0)}$
\begin{gather*}
g = l_r,  \qquad h=l_s, \qquad g_r=2(A_r^2+B_r^2)+l_{rr}, \\
h_s = 2(A_s^2+B_s^2)+l_{ss}, \qquad  g_s=h_r=2(A_rA_s+B_rB_s)+l_{rs} \,.  
\end{gather*}
If we linearize $g$ and $h$ near $z=z_p^{(0)}$ and evaluate at
$z=z_p^{\mathrm min}$ where $g=h=0$, we obtain the following approximate
equations
\begin{align}
0 = g(z_p^{\mathrm min}) & \approx  g(z_p^{(0)}) + \delta r\, g_r(z_p^{(0)}) 
+ \delta s\, g_s(z_p^{(0)}) \nonumber \\
0 = h(z_p^{\mathrm min}) & \approx  h(z_p^{(0)}) + \delta r\, h_r(z_p^{(0)}) 
+ \delta s\, h_s(z_p^{(0)}) \,,
\label{gh}
\end{align}
which can be regarded as a linear system $M\delta u = v$ for 
$\delta u = (\delta r, \, \delta s)^T$,
$v= - (l_r, \, l_s)(z_p^{(0)})$. We next proceed to obtain asymptotic
expressions for the entries of the coefficient matrix $M$ which is real symmetric.
It is convenient first to express these entries in terms of generalized
Schur polynomials. It is known from \eqref{lasymp} that
$l(z_p^{(0)}) \sim |\hat{p}_{n-1}|^2 \sim O(|t|^{n-1})$ for $|t| \gg 0$.
Consequently, $l_r,l_s \sim O(|t|^{n-3/2})$ and $l_{rr},l_{rs},l_{ss} \sim O(|t|^{n-2})$
at $z=z_p^{(0)}$. (Recall that $z_p^{(0)}$ represents any of the approximate
peak locations that were denoted by $z_j(t), \, j=1,\ldots,n$ in Sections 4.1 and 4.2).
Since $A+iB=\tilde{p}_n$, one can express the quantity 
$A_r^2+B_r^2 = |\partial_r(A+iB)|^2 = |\tilde{p}_{n-1}|^2$,
and similarly,
$A_s^2+B_s^2~= |\partial_s(A+iB)|^2 = |\tilde{p}_{n-1}-\sf{i}{2b}\tilde{p}_{n-2}|^2$
where \eqref{thetars} and \eqref{propa} were used to compute the derivatives.
Moreover, $(A_r+iB_r)(A_s-iB_s) = (A_rA_s+B_rB_s)-i(A_rB_s-A_sB_r)
= -i|\tilde{p}_{n-1}|^2+\sf{1}{2b}\tilde{p}_{n-1}\overline{\tilde{p}}_{n-2}$
where the overbar indicates complex conjugation. Equating the real and imaginary part
of the previous expression give
$A_rA_s+B_rB_s = \sf{1}{2b}\Re(\tilde{p}_{n-1}\overline{\tilde{p}}_{n-2})$
and $A_rB_s-A_sB_r=|\tilde{p}_{n-1}|^2-
\sf{1}{2b}\Im(\tilde{p}_{n-1}\overline{\tilde{p}}_{n-2})$. Finally, using
\eqref{pndominant-a} and \eqref{pndominant-b}, the leading order behavior
of the entries of the matrix $M$ for $|t| \gg 0$ are given as follows
\begin{gather}
g_r \sim 2(A_r^2+B_r^2) = 2|\tilde{p}_{n-1}|^2 \sim O(|t|^{n-1}), \nonumber \\
h_s \sim 2(A_s^2+B_s^2) = 2|\tilde{p}_{n-1}-\sf{i}{2b}\tilde{p}_{n-2}|^2 
\sim 2|\tilde{p}_{n-1}|^2 \sim O(|t|^{n-1}) \label{ghasymp} \\
g_s=h_r \sim 2(A_rA_s+B_rB_s) = 
\sf{1}{2b}\Re(\tilde{p}_{n-1}\overline{\tilde{p}}_{n-2}) \sim O(|t|^{n-3/2}) \,,
\nonumber
\end{gather}
at $z=z_p^{(0)}$. It follows from \eqref{ghasymp}
that the off diagonal entries of $M$ in \eqref{gh}
are of lower order in $|t|$ than the diagonal entries. The determinant of $M$ to
leading order in $|t|$ is given by
\[\det M = g_rh_s-g_sh_r \sim 4(A_rB_s-A_sB_r)^2 = 4 \big(|\tilde{p}_{n-1}|^2-
\sf{1}{2b}\Im(\tilde{p}_{n-1}\overline{\tilde{p}}_{n-2})\big)^2
\sim 4|\tilde{p}_{n-1}|^4 \sim O(|t|^{2(n-1)}) \,,\]
which shows that $\det M > 0$, hence $M$ is invertible. Then
from the solution $\delta u = (\delta r, \, \delta s)^T=M^{-1}v$ of the
linear system \eqref{gh}  
one can estimate $\|\delta u\| \leq \|M^{-1}\|\|v\|$ in any typical matrix norm.
Since $l_r,l_s \sim O(|t|^{n-3/2})$, it is easily seen that
$\|v\| \sim O(|t|^{n-3/2})$ and from the leading order behaviors 
given in \eqref{ghasymp}, $\|M^{-1}\| \sim O(|t|^{-(n-1)})$,
which implies that $\|\delta u\|$ is at most $O(|t|^{-1/2})$. Thus the approximate
peak locations $z_p^{(0)}$ and $z_p^{\mathrm min}$ are within $O(|t|^{-1/2})$
for large $|t|$, which establishes the second assertion in \eqref{zpeak}.

In order to establish the first assertion in \eqref{zpeak},
we proceed in a similar way as above. Namely, we evaluate the partial 
derivatives $G := u_{nr}$
and $H := u_{ns}$ at
$z_p^{\mathrm min}$ where $F_{nr}=F_{ns}=0$. Indeed  setting
$z=z_p$ in the linear approximation of $G(z), \, H(z)$ near
$z=z_p^{\mathrm min}$ and using the fact that $G(z_p)=H(z_p)=0$,
it is possible
to obtain a linear system for $\delta u = (\delta r, \, \delta s)^T$ with
$v=(G,\, H)(z_p^{\mathrm min})$ and the coefficient matrix $M$ which
has entries $G_r,H_s,G_s=H_r$ evaluated at $z=z_p^{\mathrm min}$.
In order to find the dominant behavior of the entries of $M$ for
$|t| \gg 0$ we actually evaluate them at $z=z_p^{(0)}$ instead of $z_p^{\mathrm min}$.
The error due to this approximation only gives a lower order (in $|t|$)
correction term to the leading order estimate $\delta u \sim O(|t|^{-1/2})$
obtained this way. We do not include the detailed asymptotic expressions
since they are straightforward but cumbersome.

\end{document}